\definecolor{morange}{rgb}{0.8,0.2,0}
\definecolor{mblue}{rgb}{0,0.3,1.0}
\definecolor{mgreen}{rgb}{0.2,0.4,0}
\newcommand{\txJ}[1]{\text{Tx}_{#1}}
\newcommand{\rxJ}[1]{\text{Rx}_{#1}}
\newcommand{\yrxJN}[2]{y_{\text{Rx}_{#1}}[#2]}
\newcommand{\yrxJNphi}[2]{y^{\varphi}_{\text{Rx}_{#1}}[#2]}
\newcommand{\yrxJNsys}[2]{y_{\text{Rx}_{#1}}[#2]}
\newcommand{\Ne}[2]{N_{#1}[#2]}
\newcommand{\Nephi}[2]{N^{\varphi}_{#1}[#2]}
\newcommand{\Nesys}[2]{N_{#1}[#2]}
\newcommand{\pe}[1]{P_{e}^{#1}}
\newcommand{\pec}[2]{P_{e,{#1}}^{#2}}
\newcommand{\pk}[2]{P_{#1}[#2]}
\newcommand{\pksys}[2]{P_{#1}[#2]}
\newcommand{\phipksys}[2]{P^{\varphi}_{#1}[#2]}
\newcommand{\Ntx}{N_1}
\newcommand{\pdft}[3]{f_{#1}^{#2}(#3)}
\newcommand{\pdfts}[4]{f_{#1}^{#2}(#3,\!#4\!)}
\newcommand{\pdfsisot}[3]{g^{#2}_{#1}(#3)}
\newcommand{\cdfsisot}[3]{G^{#2}_{#1}(#3)}
\newcommand{\enterp}[2]{s_{{#1},{#2}}}
\newcommand{\fixpnew}[2]{s^{#2}_{#1}}
\newcommand{\cdft}[3]{F_{#1}^{#2}(#3)}
\newcommand{\dptRxJ}[2]{d_{#1}^{\text{Rx}_{#2}}}
\newcommand{\expe}[1]{\mathbb{E}\left[ #1 \right]}
\newcommand{\sminus}{\scalebox{0.75}[1.0]{$\,-\,$}}
\newcommand\numberthis{\addtocounter{equation}{1}\tag{\theequation}}
\theoremstyle{plain}
\newtheorem{theorem}{Theorem}
\newtheorem{remark}{Remark}
\begin{document}

%%%%%%%%%%%%%%%%%%%%%%%%%%%%%%%%%%%%%%%%%%%%%%%%%%%%%%%%%
\title{Two-Way Molecular Communications }
%%%%%%%%%%%%%%%%%%%%%%%%%%%%%%%%%%%%%%%%%%%%%%%%%%%%%%%%%

\author{Jong Woo Kwak,~\IEEEmembership{Student Member,~IEEE,}
        H. Birkan Yilmaz,~\IEEEmembership{Member,~IEEE,} Nariman Farsad,~\IEEEmembership{Member,~IEEE,} Chan-Byoung Chae,~\IEEEmembership{Senior Member,~IEEE,} 
        and~Andrea Goldsmith,~\IEEEmembership{Fellow,~IEEE}% <-this % stops a space
\thanks{J. Kwak and C.-B. Chae are with the School of Integrated Technology, Yonsei University, Korea. H. B. Yilmaz was with the School of integrated Technology, Yonsei University, Korea, and now he is with the Dept. of Network Engineering, Polytechnic University of Catalonia, Spain. N. Farsad and A. Goldsmith are with the Department of Electrical Engineering, Stanford University, USA.}% <-this % stops a space
%\thanks{J. Doe and J. Doe are with Anonymous University.}% <-this % stops a space
\thanks{This research was supported in part by the MSIT (Ministry of Science and ICT), Korea, under the ICT Consilience Creative program (IITP-2017-2017-0-01015), the BA-SIC Science Research Program (2017R1A1A1A05001439) through the National Research Foundation of Korea, the Government of Catalonia's Secretariat for Universities and Research via the Beatriu de Pinos postdoctoral program, and the NSF Center for Science of Information (CSoI) under grant CCF-0939370.}
\thanks{Part of this work is presented in~\cite{kwack_conference}.}

}

% The paper headers
\markboth{}%
{Short Title HERE}
% The only time the second header will appear is for the odd numbered pages
% after the title page when using the twoside option.

% If you want to put a publisher's ID mark on the page you can do it like
% this:
%\IEEEpubid{0000--0000/00\$00.00~\copyright~2007 IEEE}
% Remember, if you use this you must call \IEEEpubidadjcol in the second
% column for its text to clear the IEEEpubid mark.

% make the title area
\maketitle

\begin{abstract}
%\boldmath
	For nano-scale communications, there must be cooperation and simultaneous communication between nano devices. To this end, in this paper, we investigate two-way (a.k.a. bi-directional) molecular communications between nano devices. If different types of molecules are used for the communication links, the two-way system eliminates the need to consider self-interference. However, in many systems, it is not feasible to use a different type of molecule for each communication link. Thus, we propose a two-way molecular communication system that uses a single type of molecule. We derive a channel model for this system and use it to analyze the proposed system's bit error rate, throughput, and self-interference. Moreover, we propose analog- and digital- self-interference cancellation techniques. The enhancement of link-level performance using these techniques is confirmed with both particle-based simulations and analytical results. 

\end{abstract}

% Note that keywords are not normally used for peerreview papers.
\begin{IEEEkeywords}
Molecular communication, two-way communication, and self-interference cancellation.
\end{IEEEkeywords}

% For peer review papers, you can put extra information on the cover
% page as needed:
% \ifCLASSOPTIONpeerreview
% \begin{center} \bfseries ED-SICS Category: 3-BBND \end{center}
% \fi
%
% For peerreview papers, this IEEEtran command inserts a page break and
% creates the second title. It will be ignored for other modes.
\IEEEpeerreviewmaketitle

%%%%%%%%%%%%%%%%%%%%%%%%%%%%%%%%%%%%%%%%%%%%%%%%%%%%%%%%%%%%%%%%%%%%%%%%%%%%%%%%%%
%%%%%%%%%%%%%%%%%%%%%%%%%%%%%%%%%%%%%%%%%%%%%%%%%%%%%%%%%%%%%%%%%%%%%%%%%%%%%%%%%%
%%%%%%%%%%%%%%%%%%%%%%%%%%%%%%%%%%%%%%%%%%%%%%%%%%%%%%%%%%%%%%%%%%%%%%%%%%%%%%%%%%
%%%%%%%%%%%%%%%%%%%%%%%%%%%%%%%%%%%%%%%%%%%%%%%%%%%%%%%%%%%%%%%%%%%%%%%%%%%%%%%%%%
%%%%%%%%%%%%%%%%%%%%%%%%%%%%%%%%%%%%%%%%%%%%%%%%%%%%%%%%%%%%%%%%%%%%%%%%%%%%%%%%%%
\section{Introduction}

\IEEEPARstart{O}{ver} the past decade, developments in the field of nano robotics have enabled the use of nano devices in various technologies and especially in those used by the bio-medical industry~\cite{lavan2003smallSS,requicha2003nanorobotsNEMS,basu2005syntheticMS,cavalcanti2005nanoroboticsCD,couvreur2006nanotechnologyID}. Well-organized clusters of nano devices can be used for drug delivery applications and artificial immune systems. Each cluster is responsible for a single task, e.g., discovering or destroying of pathogens. Since a nano device can only perform simple tasks, it is important to have a communication system among nano devices. Radio frequency (RF)-based communication is not suitable for nano devices because of physical limitations such as the size of the antenna, which is typically proportional to the wavelength of the electromagnetic (EM) wave in order to maximize efficiency~\cite{farsad2016comprehensiveSO,guo2015molecularVE}. Furthermore, EM waves---especially at high frequencies---do not propagate well in the body~\cite{malak2012molecularCN}. 

Thus, researchers have focused on molecular communication as an alternative to RF-based communication, where information is transmitted via molecules. One such system is that of molecular communication via diffusion (MCvD). Here molecules are propagated in an environment by diffusion~\cite{birkan2010energy}. An MCvD system mainly consists of the following: transmitter nodes capable of emitting and modulating information through molecules, receiver nodes capable of receiving and demodulating molecular signals, information molecules to transfer information, and a fluid environment to host nodes and molecules. Physical realization of the transceiver is also an important issue. The authors in~\cite{nature1, nature2} used Quorum Sensing~\cite{QS} to synchronize the activities of a cluster of engineered bacteria implemented on a chip. Using this orchestrated bacteria cluster as a node, the authors in~\cite{MCN} proposed a diffusion-based molecular communication network.

One of the main challenges in MCvD is to establish channel models for representing the molecular received signal (i.e., a time-dependent solution of the fraction of received molecules by time $t$). Some known channel models assume that the arrival time of molecules are a first-passage time process (i.e., information molecules are absorbed whenever they hit a receiver)~\cite{redner2001guideTF,srinivas2012molecularCI,nakano2012channelMA,yilmaz2014threeDC,genc2016isiAM}. 
Compared to the case of transparent receivers, using the absorbing receiver makes it hard to derive the channel model due to the additional boundary condition. In this paper, we consider the absorbing receivers.
The authors in~\cite{yilmaz2014threeDC} modeled the molecular received signal in a three-dimensional (3-D) environment-- a point source represented a transmitter, and an absorbing sphere represented a receiver. For this basic topology, it is possible to acquire an analytical closed form of the channel model representing the molecular received signal due to spherical symmetry. If the system has more than one absorbing sphere (receiver), however, such symmetry disappears. It is then difficult to model the arrival times mathematically. 

Other challenges in MCvD include low transmission rates due to severe inter-symbol interference (ISI). In MCvD, ISI occurs when the molecules of a previous symbol are absorbed by the target receiver in the current or future symbol slots. The heavy tail nature of impulse responses in MCvD causes severe ISI. Thus, several researchers \cite{adam2014enzyme,tepe2015ISI} have suggested ISI mitigation techniques, including enzymatic degradation of ISI using different molecule types. All of these works assumed a one-way MCvD system whereby molecules are transmitted in one direction from the transmitter to the receiver. While this simplifies the design, recent work in full-duplex radio communications~\cite{stanford2014, passivecancellation,protominkeun} indicates that data rate gains along with other performance advantages may be obtained from two-way communication.

In this paper, we propose a two-way MCvD system that uses a \emph{single} type of molecule for simultaneous communication between two nano devices. If each of the directional communication links uses a different type of molecule, then self-interference\footnote{Self-interference in MCvD refers to a phenomenon in which a molecule emitted from the transmitter is absorbed by its own rather than the intended receiver.} (SI) will not occur~\cite{twoway2017blackseacom, yu2019twoway,twoway_transparent}. However, this is not a feasible solution. First, the nano devices are too simple to perform complex tasks and, second, the number of molecule types will increase rapidly, on the order of $\binom{\ell}{2}$ where $\ell$ is the number of communication links. Therefore, we propose a two-way MCvD system that uses a single type of molecule for each link. Two communication models—a half-duplex
system and a full-duplex system—are considered. In
the half-duplex system, each paired transceiver (i.e., for $\rxJ{1}$, the paired transmitter is $\txJ{2}$) operates
alternately with respect to time. In the full-duplex system,
each paired transceiver operates simultaneously with respect
to time (i.e., it is not necessary to wait until the
other transceiver ends its communication). However, the analytical modeling of such a system is not without difficulty---two absorbing spheres are to use a single type of molecule. Unfortunately, to implement simultaneous communication between two nano devices, we cannot use previous studies on MCvD channel models as such studies assumed a one-way MCvD system with a single absorbing sphere. Therefore, in this paper, we analytically model the molecular received signal in the case of two absorbing spheres, propose SI cancellation (SIC) techniques, and analyze the proposed system's performance in terms of bit error rate (BER) and throughput. The main contributions of this paper are as follows:
\begin{itemize}
	\item The channel model of a two-way MCvD system with two absorbing sphere receivers is derived. To the best of our knowledge, it is the first work that derives the channel model for multi absorbing receiver case. We improve the channel model derived in~\cite{kwack_conference}. More precisely, we analytically derive the asymptotic capture probability (i.e., the probability that the molecules eventually being absorbed by the receivers). By using the derived asymptotic capture probability, we also derive an approximation of the time-dependent solution of the fraction of the received molecules by time $t$.
	The derived channel model is validated through particle-based simulations.
	\item The paper derives the BER expression for two two-way MCvD systems---one with a half-duplex system and one with a full-duplex system. The theoretical BERs are then validated through particle-based simulations. 
	\item The paper proposes two SIC techniques---analog SI cancellation (A-SIC) and digital SI cancellation (D-SIC)---as analytical results show that it is impossible to achieve reliable data transmission without SIC in the full-duplex system. Using the derived channel model and the BER expression, we investigate the optimal values for the normalized detection threshold and the discarding time (i.e., initial part of the received signal is ignored by discarding time $T_c$) in order to minimize the BER. The throughputs and BERs of both systems are analyzed. We compare the SIC-adopted full-duplex system with binary concentration shift keying (BCSK) to
	the half-duplex system with quadrature concentration shift keying (QCSK), which is not provided in~\cite{kwack_conference}.
\end{itemize}

The rest of this paper is organized as follows. In Section~\ref{sec_one_way_comm}, we discuss the conventional one-way MCvD model. In Section~\ref{sec_two_way_comm}, we introduce the proposed two-way MCvD system and SIC techniques. In Section~\ref{sec_ber_formulations}, we investigate the channel model of a two-way MCvD system with two absorbing sphere receivers. Then we present the channel model verifications and BER formulations for the two considered systems. In Section~\ref{sec_numerical_results}, we present performance analysis results in terms of BER and throughput for the two considered systems. Finally, in Section~\ref{sec_conclusion}, we present our conclusions.

%%%%%%%%%%%%%%%%%%%%%%%%%%%%%%%%%%%%%%%%%%%%%%%%%%%%%%%%%%%%%%%%%%%%%%%%%%%%%%%%%%
%%%%%%%%%%%%%%%%%%%%%%%%%%%%%%%%%%%%%%%%%%%%%%%%%%%%%%%%%%%%%%%%%%%%%%%%%%%%%%%%%%
%%%%%%%%%%%%%%%%%%%%%%%%%%%%%%%%%%%%%%%%%%%%%%%%%%%%%%%%%%%%%%%%%%%%%%%%%%%%%%%%%%
%%%%%%%%%%%%%%%%%%%%%%%%%%%%%%%%%%%%%%%%%%%%%%%%%%%%%%%%%%%%%%%%%%%%%%%%%%%%%%%%%%
%%%%%%%%%%%%%%%%%%%%%%%%%%%%%%%%%%%%%%%%%%%%%%%%%%%%%%%%%%%%%%%%%%%%%%%%%%%%%%%%%%
\section{One-Way Molecular Communication}
\label{sec_one_way_comm}
We start by providing details of the conventional one-way MCvD system. Also, we present details of the single absorbing receiver channel model so that the reader may understand the differences and challenges with respect to the corresponding model for two-way molecular communication system.

\subsection{Conventional one-way MCvD system}
The conventional one-way MCvD system consists of a point source (point transmitter) and an absorbing sphere (receiver).
In Fig.~\ref{fig_siso_system_model}, the point source (point transmitter), Tx, is separated from the absorbing sphere (receiver), $\rxJ{1}$, the radius of which is denoted by $r_r$, by a distance of $d$. Here we focus on three molecular processes: emission, propagation, and reception. The emission process is related to the modulation of the data bits onto the physical properties of the molecules or the emission time\cite{modul2013isomer}. The propagation process is governed by diffusion and flow\cite{narae2014ISIopt,narae2014nonlinear,nariman2013tabletop}. In this paper, we only consider diffusion. The reception process is related to the acquisition of the molecules at the receiver and the demodulation of the data bits.
%%%%%%%%%%%%%%%%%% 
\begin{figure}[!t]
	\begin{center}
		\includegraphics[width=0.98\columnwidth,keepaspectratio]%
		{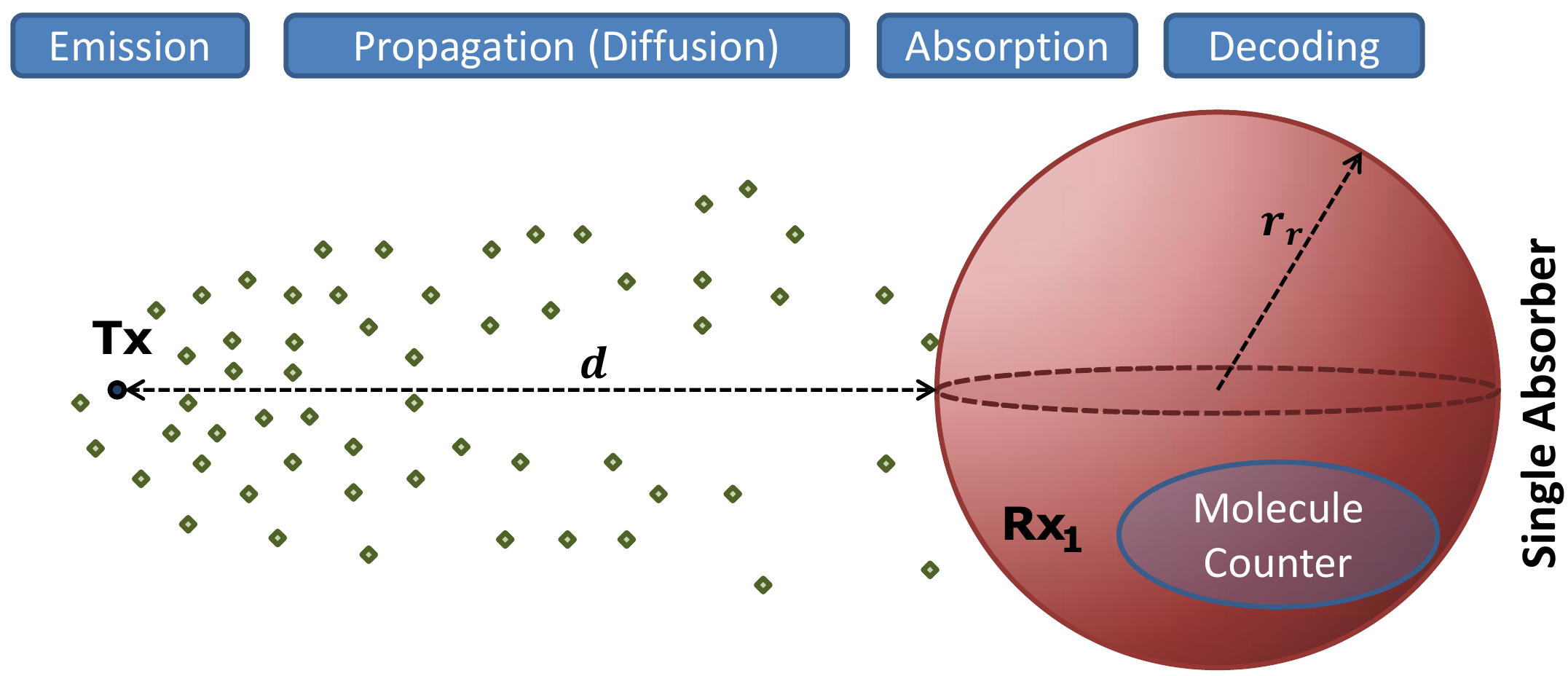}
		\caption{Conventional one-way MCvD transceiver model and the processes with a point transmitter (Tx) and an absorbing sphere ($\rxJ{1}$). }
		\label{fig_siso_system_model}
	\end{center}
\end{figure}

Regarding the propagation process, the interactions between diffusing molecules are ignored since the messenger molecules are assumed to be chemically stable. We assume that the transmitter and the receiver are fully synchronized, which can be achieved by the method introduced in~\cite{shahmohammadian2013blindSI}.

\subsection{Channel Model for One-way Molecular Communication with a Single Receiver }
In diffusion-based systems, propagation of the molecules is governed by Fick's second law in a 3-D environment; that is 
%%%%%%%%%%%%%%%%%%%%%%%%%%%%%%
\begin{align}
\label{eq_ficks3}
\frac{\partial p(r,t| r_0)}{\partial t} = D \nabla^2 p(r,t| r_0),
\end{align}
%%%%%%%%%%%%%%%%%%%%%%%%%%%%%%
where $\nabla^2$, $p(r,t| r_0)$, and $D$ are the Laplacian operator, the molecule distribution function at time $t$ and distance $r$ given the initial distance $r_0$, and the diffusion constant, respectively. The value of $D$ depends on the temperature, viscosity of the fluid, and the Stokes radius of the molecule~\cite{diffusion_coeffi}.
%%%%%%%%%%%%%%%%%%%
\begin{figure*}[!t]
	\begin{center}	
		\includegraphics[width=1.8\columnwidth,keepaspectratio]%
		{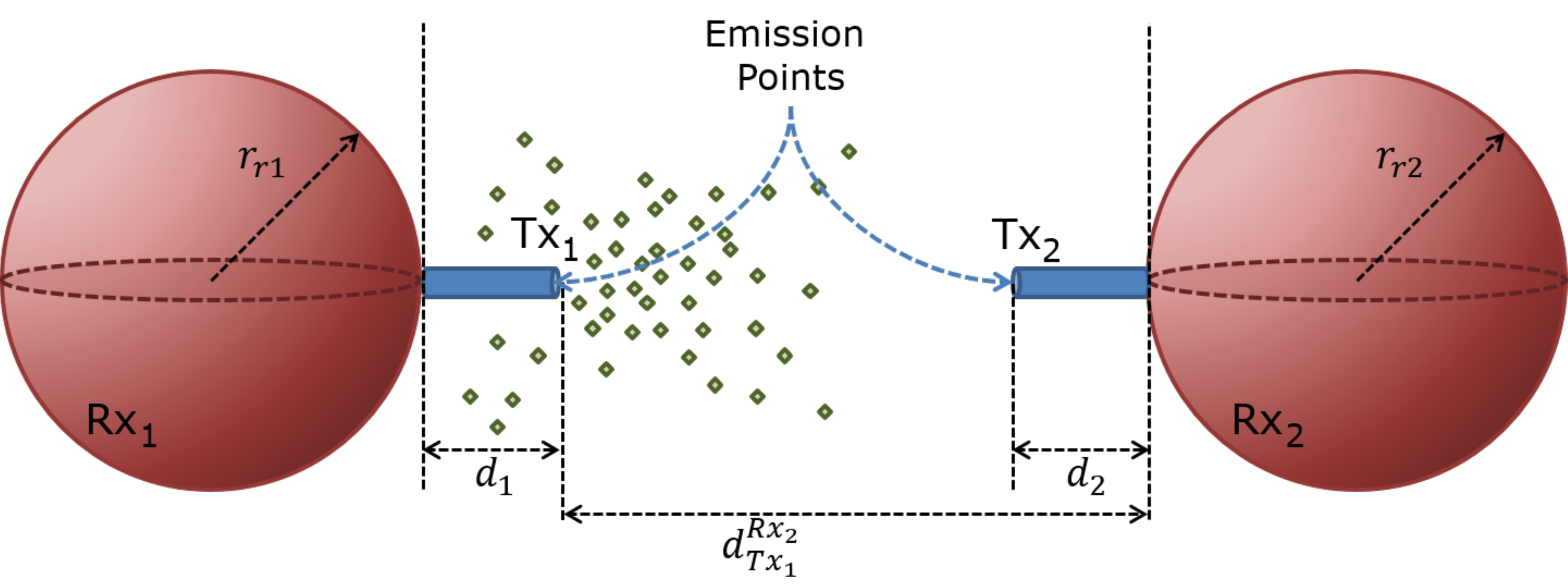}
		\caption{Model for two-way MCvD system featuring two transmitters (i.e., $\txJ{1}$ and $\txJ{2}$) and two receivers (i.e., $\rxJ{1}$ and $\rxJ{2}$).}
		\label{fig_system_model}
	\end{center}
\end{figure*}

Fig.~\ref{fig_siso_system_model} illustrates a simple topology of one-way MCvD. In~\cite{yilmaz2014threeDC}, the expected channel response of one-way MCvD is presented and analyzed from a channel characteristics perspective. Also, a time-dependent solution for a fraction of molecules hitting a single absorbing sphere ($\rxJ{1}$) by time $t$ is presented, as follows:
%%%%%%%%%%%%%
\begin{align}
\begin{split}
\cdfsisot{1}{\text{Tx}}{t} &=  \int_0^t \pdfsisot{1}{\text{Tx}}{t'} \,dt' \\
        & = \int_0^t \frac{r_r}{r_r\!+\!d} \, \frac{d}{\sqrt{4\pi D t'^3 \,}} e^{-d^2/4Dt'} \,dt' \\
		& = \frac{r_r}{r_r\!+\!d} \,\text{erfc} \left( \frac{d}{\sqrt{4Dt\,}}\right), 
\end{split}
\label{eqn_3d_frac_received_point_src}
\end{align}
where $\text{Tx}$, $\pdfsisot{1}{\text{Tx}}{t}$, $r_r$, $d$, and $\text{erfc}(\cdot)$ represent the location of the point transmitter, the instantaneous hitting probability density (i.e., the arrival time distribution for $\rxJ{1}$ when the molecules are emitted from $\text{Tx}$), the radius of the receiver, the distance from Tx to $\rxJ{1}$, and the complementary error function, respectively.

\section{Two-Way Molecular Communication}
\label{sec_two_way_comm}
In this section, we introduce two modes of operation for the proposed two-way MCvD system, i.e., full-duplex and half-duplex based on time-division. In the former case, severe SI is observed when both transceivers simultaneously modulate signals  using the same type of molecule. Therefore, in this case, it is necessary to apply an SIC technique.

\subsection{Topology}

We consider a 3-D environment with two point sources (transmitters) and two absorbing spheres (receivers). Each transmitter emits molecules without directionality. The propagation of molecules is governed by~\eqref{eq_ficks3}. The molecules are immediately absorbed when they reach the surface of any one of the receivers. Since the absorbed molecules are removed from the system, each molecule is detected at most once.

Fig.~\ref{fig_system_model} shows the model of the proposed system. The blue pipes in Fig.~\ref{fig_system_model} represent the connection between the transmitting part and the receiving part. They are assumed to be transparent to the released molecules. $\txJ{i}$ releases molecules that are intended to be absorbed by $\rxJ{j}$ ($i\neq j$). If the molecules that are released from $\txJ{i}$ are absorbed by $\rxJ{i}$ (not the desired result), then we call this SI. The distance between $\txJ{i}$ and $\rxJ{i}$ is denoted by $d_i$, and the distance from a point $p$ to $\rxJ{j}$ is denoted by $\dptRxJ{p}{j}$. 

The system model illustrated in Fig. 2 is similar to full duplex radios with RF technologies that have antenna separation as passive analog cancellation and digital self interference cancellation. We leave the fabrication issues for the transceivers for our future work and will focus on fundamental analyses of the proposed system.

\subsection{Communication Model \& Modulation}

Consider the following two modes of operation for the proposed two-way MCvD system:
\begin{itemize}
\item Half-duplex system: $\txJ{1}$ and $\txJ{2}$ release molecules alternately (i.e., when $\txJ{1}$ emits molecules, $\rxJ{1}$ and $\rxJ{2}$ receive the molecules but $\rxJ{1}$ does not count the molecules).
\item Full-duplex system: $\txJ{1}$ and $\txJ{2}$ release molecules simultaneously that are intended for $\rxJ{2}$ and $\rxJ{1}$, respectively. Receiver $\rxJ{i}$ receives and counts the molecules that are emitted by $\txJ{i}$ and $\txJ{j}$.
\end{itemize}

In the half-duplex system, at least half of the elements of the bit sequences are not used (i.e., $\txJ{1}$ does not emits molecules when $\txJ{2}$ emits molecules), which is not the case for the full-duplex system. Therefore, the ISI and the SI are much more severe in the full-duplex system. For the $n\text{th}$ symbol period, the molecular received signal is composed of $2n$ bits including the current symbols and the previous $2n\!-\!2$ symbols sent from the two transmitters. The bit sequences for the transmitters are denoted by $x_{1}[1\!:\!n]$ and $x_{2}[1\!:\!n]$. 

For the modulation, we use binary/quadrature concentration shift keying (BCSK, QCSK) ~\cite{modul2013isomer}. We let $\Ntx$ denote the number of molecules for encoding bit-1, and we define that there will be no emission in the case of bit-0 for BCSK. Each of the transmitters has its bit sequences $x_i$ to encode, where $x_i[k]$ denotes the symbol in the $k\text{th}$ symbol duration for $\txJ{i}$.
We define $\pk{ij}{k}$ as the probability that molecules emitted from $\txJ{i}$ hit $\rxJ{j}$ in the $k\text{th}$ symbol duration after the emission, which is formulated as follows:  
\begin{equation}
\pk{ij}{k}\triangleq P_{ij}(kt_s, (k+1)t_s),  \ k\in{\mathbb{N}_{0}},
\label{pij}
\end{equation}
where $t_s$ and ${ P}_{ij}$(${ t}_1$,${ t}_2$) denote the symbol duration and the probability that molecules emitted from $\txJ{i}$ are absorbed by $\rxJ{j}$ but not $\rxJ{i}$ between time $t_1$ and $t_2$ after the emission.
In the rest of this paper we call $\pk{ij}{k}$ the channel coefficient.

In \eqref{pij}, $\pk{ij}{0}$ corresponds to the probability of being absorbed in the current symbol slot. We let $\yrxJN{j}{n}$ denote the number of molecules that are absorbed by $\rxJ{j}$ in the $n\text{th}$ symbol slot. Note that $\yrxJN{j}{n}$ can be affected by the number of molecules released from (i) a pair source at the current symbol slot, (ii) a pair source at the previous time slots, (iii) a non pair source at the current symbol slot, and (iv) non pair source at the previous time slots. 
To formulate $\yrxJN{j}{n}$, we define $\Ne{ij}{k}$ as follows:
\begin{equation}
\Ne{ij}{k}\sim \mathscr{B}(\Ntx,\pk{ij}{k}),
\end{equation}
where $\mathscr{B}(m,p)$ is a binomial distribution with $m$ trials and success probability $p$. Then, $\yrxJN{j}{n}$ can be formulated as follows:
\begin{equation}
\yrxJN{j}{n}\!\triangleq\!\sum_{k=0}^{n-1}(\Ne{ij}{k} \cdot x_i[n-k]+\underbrace{\Ne{jj}{k}\! \cdot\! x_j[n-k]}_{\text{self-interference}})+s[n].
\end{equation} 
To consider the misoperations of the receiver, we add the noise term which is assumed to be a Gaussian random variable, $s[n] \sim \mathscr{N}(0, \sigma_{\text{noise}}^2)$. 
For the sake of tractability, we approximate the binomial distribution as follows:
\begin{equation}
\Nesys{ij}{k} \approx \mathscr{N}(\Ntx \pksys{ij}{k}, \ \Ntx \pksys{ij}{k}(1-\pksys{ij}{k})),
\label{gaussian_approx}
\end{equation}
where $\mathscr{N}(\mu, \sigma^2)$ represents a Gaussian distribution with mean $\mu$ and variance $\sigma^2$. Hence, $\yrxJNsys{j}{n}$ can be expressed as follows (i.e., as a Gaussian random variable, where mean and variance values are dependent upon transmitted bit sequences):
\begin{align}
\begin{split}
\yrxJNsys{j}{n} & \sim\mathscr{N}(\mu_{n,\text{total}},\sigma^2_{n,\text{total}}) \\
\mu_{n,\text{total}} & =\sum_{k=0}^{n-1}\Ntx(\pksys{ij}{k}x_i[n-k]+\pksys{jj}{k}x_j[n-k]) \\
\sigma^2_{n,\text{total}} & =\sigma_{\text{noise}}^2+\sum_{k=0}^{n-1}\Ntx(\pksys{ij}{k}(1-\pksys{ij}{k})x_i[n-k]  \\ 
                  &+\pksys{jj}{k}(1-\pksys{jj}{k})x_j[n-k]).
\label{yrxjn}
\end{split}
\end{align}

%%%%%%%%%%%%%%%%%%%
\begin{figure}[!t]
	\begin{center}
		\includegraphics[width=0.98\columnwidth,keepaspectratio]%s
		{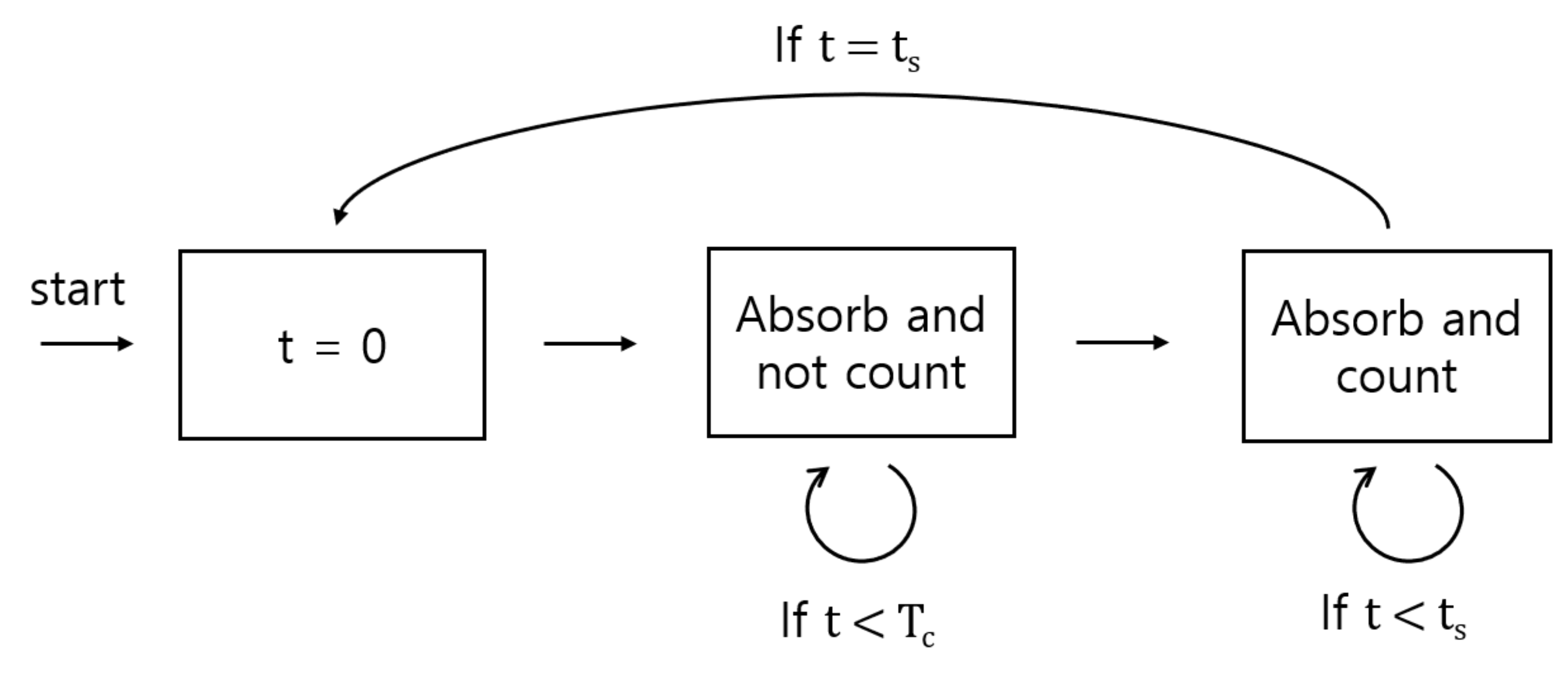}
		\caption{State diagram of the process of A-SIC. For each symbol slot, the absorbed molecules are discarded by $t=T_c$.  }
		\label{fig_A-SIC_introduce}
	\end{center}
\end{figure}
%%%%%%%%%%%%%%%%%%
\subsection{Self-Interference Cancellation}

Since the proposed two-way MCvD system comprises two transceivers that use the same type of molecule, the system's receivers are unable to distinguish molecules in terms of the transmitting source. For example, if $\txJ{1}$ sends bit-1 and $\txJ{2}$ sends bit-0, the molecules are released only from $\txJ{1}$. However, those molecules can also be absorbed by $\rxJ{1}$, which is not desired. Then $\rxJ{1}$ may decode the received signal as bit-1, even though its paired transmitter $\txJ{2}$ sends bit-0. In fact, most of the molecules released from $\txJ{1}$ will be absorbed by $\rxJ{1}$ because $\txJ{1}$ is much closer to $\rxJ{1}$ than $\rxJ{2}$. Hence, in this case, the number of received molecules is mostly dependent on the transmitted symbol from the unpaired transmitter, which makes for infeasible communication. Therefore, we propose the following two SIC techniques: 
\begin{itemize}
\item Analog self-interference cancellation (A-SIC): the initial part (i.e., between time 0 and $T_c$) of the molecular received signal for each symbol slot is ignored (see Fig.~\ref{fig_A-SIC_introduce} for the state diagram). In the rest of this paper, we call $T_c$ as discarding time.  
\item Digital self-interference cancellation (D-SIC): we predict the number of SI molecules (i.e., the number of absorbed molecules originating from the unpaired transmitter) from the current bit and subtract it from the molecular received signal
\end{itemize}

%%%%%%%%%%%%%%%%%%%
\begin{figure*}[!t]
	\begin{center}
		\includegraphics[width=1.8\columnwidth,keepaspectratio]%s
		{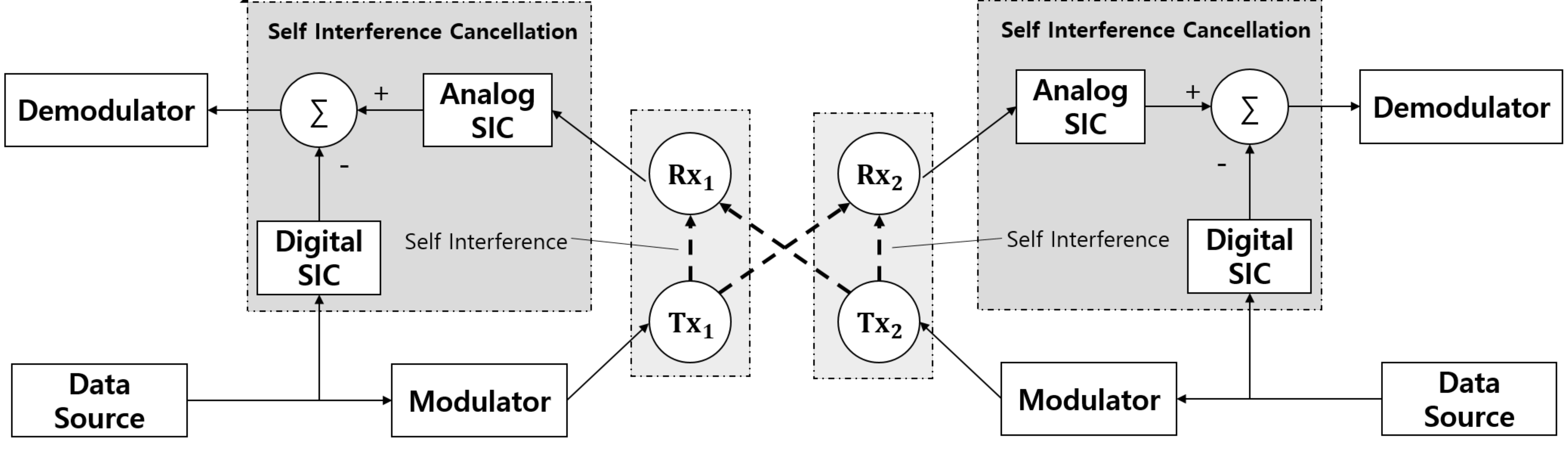}
		\caption{Block diagram of the proposed two-way MCvD system with A-SIC and D-SIC.}
		\label{fig_sic_introduce}
	\end{center}
\end{figure*}
%%%%%%%%%%%%%%%%%%%
Fig.~\ref{fig_sic_introduce} shows the full-duplex system with A-SIC and D-SIC. Note that the two SIC techniques can be applied separately. 
The channel coefficient of the system with A-SIC, $\phipksys{ij}{k}$, is given as follows:
\begin{align}
\phipksys{ij}{k}=P_{ij}(kt_s+T_c,(k+1)t_s).
\end{align}
Hence, $\Ne{ij}{k}$ of the proposed two-way MCvD system with A-SIC is denoted by $\Nephi{ij}{k}$ and becomes
\begin{align}
 \Nephi{ij}{k} \sim \mathscr{B}(\Ntx, \phipksys{ij}{k}).
 \label{phiN}
\end{align}
Furthermore, $\yrxJNsys{j}{n}$ with A-SIC and D-SIC becomes
\begin{align}
\begin{split}
\yrxJNphi{j}{n} & \triangleq \sum_{k=0}^{n-1} (\Nephi{ij}{k} \cdot x_i[n\!-\!k] + \underbrace{\Nephi{jj}{k} \cdot x_j[n\!-\!k])}_{\text{self-interference}} \\
                  & - \underbrace{\expe{\Nephi{jj}{0} \cdot x_j[n]}}_{\text{when D-SIC applied}}, \\
\text{where} &\ \ \ \expe{\Nephi{jj}{0} \cdot x_j[n]} =\Ntx\phipksys{jj}{0}x_j[n].
\end{split}
\label{eq_D-SIC}
\end{align}
$\expe{\cdot}$ is the expectation operation. 
By applying these two SIC techniques,~\eqref{yrxjn} is transformed into
\begin{align}
\begin{split}
\yrxJNphi{j}{n} & \sim\mathscr{N}(\mu^{\varphi}_{n,\text{total}},(\sigma^{\varphi}_{n,\text{total}})^2) \\
\mu^{\varphi}_{n,\text{total}} & =\sum_{k=0}^{n-1}\Ntx(\phipksys{ij}{k}x_i[n-k]+\phipksys{jj}{k}x_j[n-k]) \\
&-\underbrace{\Ntx\phipksys{jj}{0}x_j[n]}_{\text{when D-SIC applied}}\\
(\sigma^{\varphi}_{n,\text{total}})^2 & =\sigma_{\text{noise}}^2+\sum_{k=0}^{n-1}\Ntx(\phipksys{ij}{k}(1-\phipksys{ij}{k})x_i[n-k]  \\ 
&+\phipksys{jj}{k}(1-\phipksys{jj}{k})x_j[n-k]).
\label{phiyrxjn}
\end{split}
\end{align}

%%%%%%%%%%%%%%%%%%%%%%%%%%%%%%%%%%%%%%%%%%%%%%%%%%%%%%%%%%%%%%%%%%%%%%%%%%%%%%%%%%
%%%%%%%%%%%%%%%%%%%%%%%%%%%%%%%%%%%%%%%%%%%%%%%%%%%%%%%%%%%%%%%%%%%%%%%%%%%%%%%%%%
%%%%%%%%%%%%%%%%%%%%%%%%%%%%%%%%%%%%%%%%%%%%%%%%%%%%%%%%%%%%%%%%%%%%%%%%%%%%%%%%%%
%%%%%%%%%%%%%%%%%%%%%%%%%%%%%%%%%%%%%%%%%%%%%%%%%%%%%%%%%%%%%%%%%%%%%%%%%%%%%%%%%%
\section{Channel Model \& BER Formulation of Two-Way Molecular Communication}
\label{sec_ber_formulations}
We formulate the BER as a function of detection threshold $\tau$, the number of molecules for encoding bit-1 ($\Ntx$), $\phipksys{ij}{k}$ ($\phipksys{ij}{k}$ when SIC is applied), and the symbol duration $t_s$. In this section, we first derive the channel model to obtain the channel coefficients (i.e., $\pksys{ij}{k}$ and $\phipksys{ij}{k}$) and then utilize these channel coefficients in the BER calculations.

\subsection{Channel Model}

As we mentioned, the channel model is a time-dependent solution of the fraction of received molecules by time $t$. In the proposed two-way MCvD system, we cannot use~\eqref{eqn_3d_frac_received_point_src} directly to derive the channel model since we have to consider the events of molecules being absorbed by each $\rxJ{i}$, which are not independent of each other. The proposed BER formula and the SIC techniques are based on the channel model. However, there is no analytical closed-form solution in the literature for the case of two absorbing spherical receivers. In prior work on molecular MIMO \cite{bonhong2016MIMO,ml2017SPAWC},  researchers considered two pairs of point transmitters and fully absorbing receivers. To obtain the channel models, the authors in \cite{bonhong2016MIMO} and \cite{ml2017SPAWC} utilized a one-way MCvD system channel model for a single receiver \cite{yilmaz2014threeDC} and fitted the coefficients accordingly. In this paper, we propose a new approach to analytically derive the multi-receiver channel model.

Fig.~\ref{fig_topology_general_coordinate} describes the notations and the cartesian coordinates in the derivation. The centers of two receivers are located on $z$-axis and the distance between them is denoted by $\ell$ (i.e., $(0,0,a)$ and $(0,0,a-\ell)$). Our goal is to obtain the fraction of received molecules by time $t$ for the receivers when $\txJ{1}$ is the emitter (i.e., $\cdft{1}{\txJ{1}}{t}$ and $\cdft{2}{\txJ{1}}{t}$ for the receivers $\rxJ{1}$ and $\rxJ{2}$, respectively).

In  the rest of this subsection,  we first derive a closed form of $\lim_{t\to\infty}\cdft{1}{\txJ{1}}{t}$ and $\lim_{t\to\infty}\cdft{2}{\txJ{1}}{t}$, which correspond to the probability that the molecule eventually being absorbed by $\rxJ{1}$ and $\rxJ{2}$, respectively. This so-called asymptotic capture probability was studied in~\cite{smolu_solution}. The authors in~\cite{smolu_solution} derive an exact solution of the asymptotic capture probability for the case of two absorbing spheres, which has an identical radius. In this paper we generalize this solution to release the equal-radius constraint since our system adopts a general topology.

%%%%%%%%%%%%%%
\begin{figure}[!t]
	\begin{center}
		\includegraphics[width=0.98\columnwidth,keepaspectratio]%
		{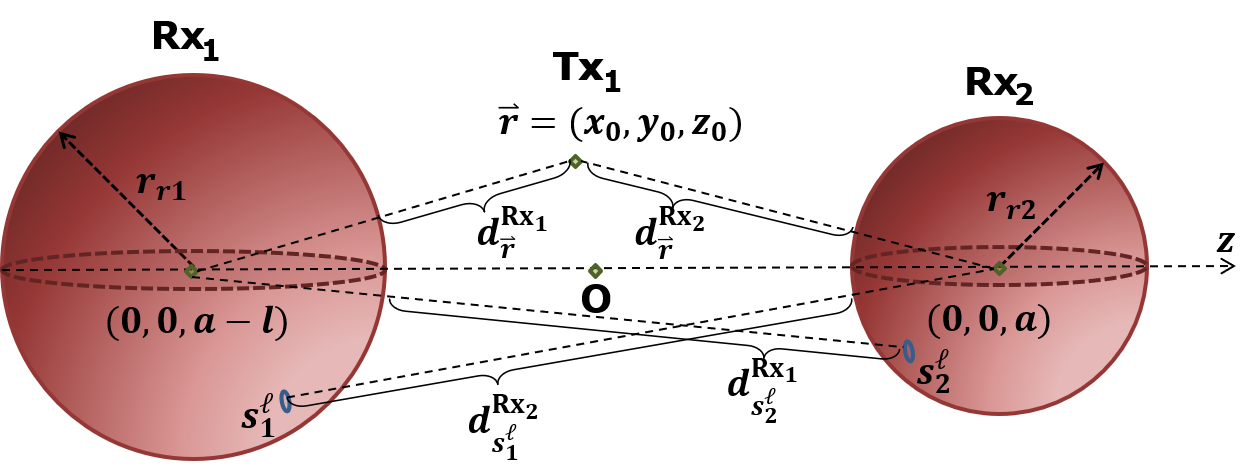}
		
		\caption{Assumed topology in the derivation}
		\label{fig_topology_general_coordinate}
	\end{center}
\end{figure}

In the derivation of the asymptotic capture probability, we use bispherical coordinate $(\mu,\eta,\phi)$~\cite[p. 
1298]{smolu_laplace_solution}). 
Note that bispherical coordinate is obtained by rotating bipolar axes about the line between the two foci, $(0,0,\pm f)$ in cartesian coordinates:
\begin{align}
\begin{split}
\mu&= \tanh^{-1}\left(\frac{f z}{x^2+y^2+z^2+(f/2)^2}\right), \\
\eta&= \tan^{-1}\left(\frac{f(x^2+y^2)^{1/2}}{x^2+y^2+z^2-(f/2)^2}\right), \\
\phi&= \tan^{-1}\left(\frac{y}{x}\right). \\
\end{split}
\label{eq_bispherical}
\end{align}

\begin{figure}[!t]
	\begin{center}
		\includegraphics[width=0.98\columnwidth,keepaspectratio]%
		{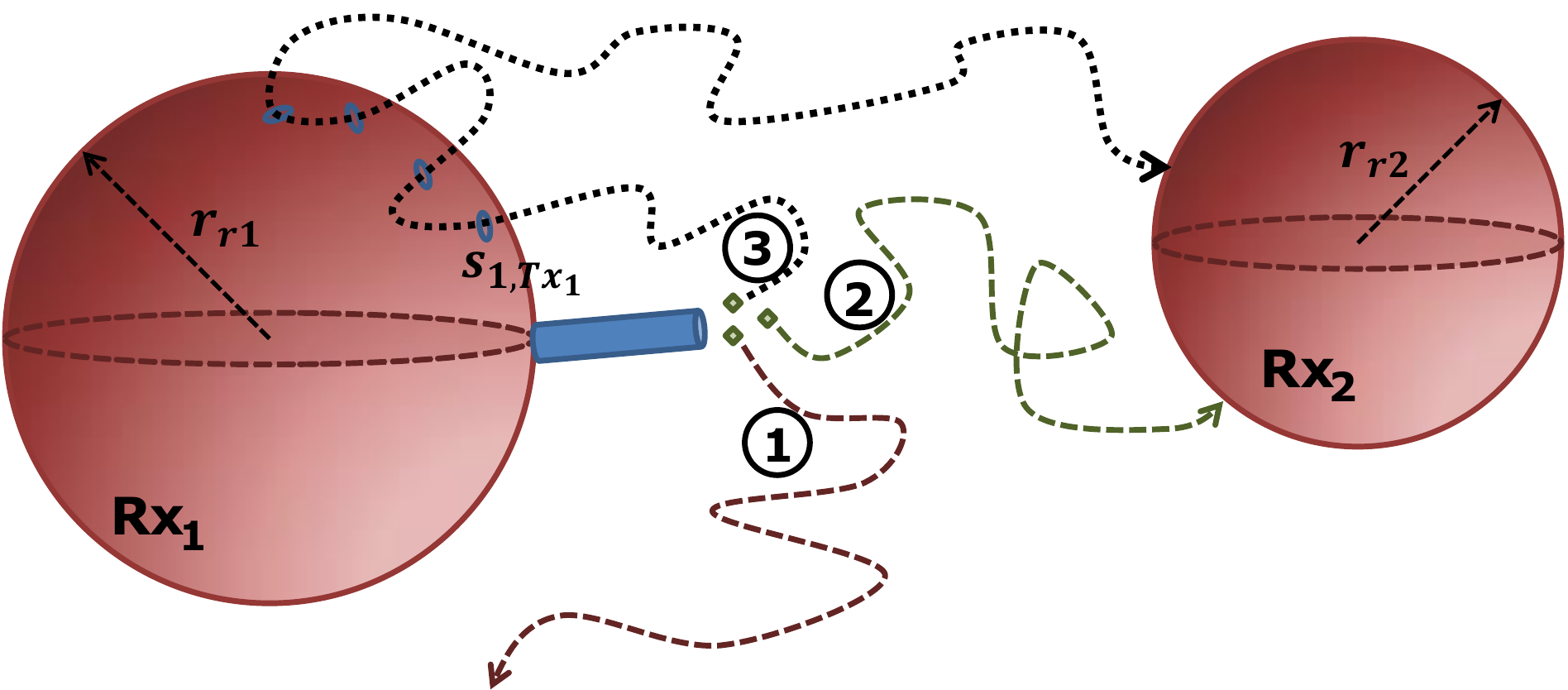}
		
		\caption{Three possible traversal paths for a molecule emitted from $\txJ{1}$. Path~1 corresponds to a molecule not hitting any of the receivers by time $t$. Path~2 corresponds to a molecule that is hitting the destination ($\rxJ{2}$) at time $t$ without hitting $\rxJ{1}$. Path~3 is a virtual path that corresponds to a molecule that is, actually hitting $\rxJ{1}$ but that would hit $\rxJ{2}$ at time $t$ if $\rxJ{1}$ was not in the environment or it was transparent to molecules.}
		\label{fig_topology_general}
	\end{center}
\end{figure}
%%%%%%%%%%%%%%

\begin{theorem}
\label{theorem1}
The asymptotic capture probabilities in the cases of two receivers, i.e., $\lim_{t\to\infty}\cdft{1}{\txJ{1}}{t}$ and $\lim_{t\to\infty}\cdft{2}{\txJ{1}}{t}$, are as follws:
\begin{align}
\begin{split}
\lim_{t\to\infty}&\cdft{1}{\txJ{1}}{t} =\sqrt{2(\cosh \mu_0 - \cos\eta_0)}\\
&\times\left(\sum_{m=0}^{\infty}\frac{e^{-(m+\frac{1}{2})\mu_1}\sinh [(m+\frac{1}{2})(\mu_0+\mu_2)]}{\sinh [(m+\frac{1}{2})(\mu_1+\mu_2)]}P_m(\cos\eta_0)\right), \\
\lim_{t\to\infty}&\cdft{2}{\txJ{1}}{t} =\sqrt{2(\cosh \mu_0 - \cos\eta_0)}\\
&\times\left(\sum_{m=0}^{\infty}\frac{e^{-(m+\frac{1}{2})\mu_2}\sinh [(m+\frac{1}{2})(\mu_1-\mu_0)]}{\sinh [(m+\frac{1}{2})(\mu_1+\mu_2)]}P_m(\cos\eta_0)\right), \\
\end{split} 
\label{eq_capture_prob}
\end{align}
where $\mu_1\!\!=\!\!\cosh^{-1}\left(\!\frac{	\ell-a}{r_{r1}}\!\right)$ and $\mu_2\!\!=\!\!\cosh^{-1}\left(\!\frac{a}{r_{r2}}\!\right)$, $P_m$ is the $m$-th order Legendre polynomial, and $(\mu_0,\eta_0,\phi_0)$ is the bispherical representation of the cartesian coordinate of $\txJ{1}$ (i.e.,  $(x_0,y_0,z_0)$) with two foci $(0,0,\pm f)$, 
\begin{equation}
f=r_{r2}\sinh\left(\cosh^{-1}\left(\frac{a}{r_{r2}}\right) \right).
\label{eq_def_f}
\end{equation}

\end{theorem}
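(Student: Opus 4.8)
The plan is to recognise $\lim_{t\to\infty}\cdft{1}{\txJ{1}}{t}$ and $\lim_{t\to\infty}\cdft{2}{\txJ{1}}{t}$ as splitting probabilities of the diffusing molecule and then to solve the resulting Dirichlet problem in bispherical coordinates. Since the trajectory of a single molecule is a Brownian motion governed by~\eqref{eq_ficks3}, let $w_j(\mathbf{r})$ be the probability that a molecule released at $\mathbf{r}$ is eventually absorbed by $\rxJ{j}$ (rather than by the other receiver or escaping to infinity). By the classical probabilistic representation of harmonic functions, $w_j$ is the bounded harmonic function on the region $\Omega$ exterior to the two balls, with $w_j\equiv 1$ on the surface of $\rxJ{j}$, $w_j\equiv 0$ on the other sphere, and $w_j(\mathbf{r})\to 0$ as $|\mathbf{r}|\to\infty$. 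The condition at infinity is legitimate because three-dimensional Brownian motion is transient, so the escape event carries the residual probability $1-w_1-w_2$, and the maximum principle makes the bounded solution that vanishes at infinity unique. Evaluating at the location $\mathbf{r}_0$ of $\txJ{1}$ gives $\lim_{t\to\infty}\cdft{j}{\txJ{1}}{t}=w_j(\mathbf{r}_0)$, so it remains to solve two copies of this two-sphere Dirichlet problem that differ only in which sphere carries the boundary value $1$.

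I would then pass to the bispherical coordinates $(\mu,\eta,\phi)$ of~\eqref{eq_bispherical} and choose the focal distance so that the two receiver surfaces become the coordinate surfaces $\mu=\mu_1$ and $\mu=-\mu_2$. A surface $\mu=c$ is a sphere of centre $(0,0,f\coth c)$ and radius $f/|\sinh c|$, so requiring the two receivers (whose centres are $\ell$ apart) to be such surfaces pins down the focal distance $f$ of~\eqref{eq_def_f} and forces the parameters $\mu_1=\cosh^{-1}((\ell-a)/r_{r1})$ and $\mu_2=\cosh^{-1}(a/r_{r2})$ in the statement. Because $\Omega$ and the boundary data are invariant under rotation about the $z$-axis, $w_j$ does not depend on $\phi$. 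Next I would use the classical $R$-separation of Laplace's equation in bispherical coordinates: the substitution $w_j=\sqrt{2(\cosh\mu-\cos\eta)}\,\Psi(\mu,\eta)$ turns the equation into a separable one whose $\eta$-factor, after requiring finiteness on the symmetry axis $\eta\in\{0,\pi\}$, is a Legendre polynomial $P_m(\cos\eta)$ with $m\in\mathbb{N}_0$, and whose $\mu$-factor is a combination of $e^{\pm(m+\frac12)\mu}$, so that
\begin{align}
w_j(\mu,\eta)=\sqrt{2(\cosh\mu-\cos\eta)}\,\sum_{m=0}^{\infty}\left(A_m e^{(m+\frac12)\mu}+B_m e^{-(m+\frac12)\mu}\right)P_m(\cos\eta).
\label{eq_plan_sep}
\end{align}
The point at infinity corresponds to $\mu=\eta=0$, where the prefactor vanishes, so any series of the form~\eqref{eq_plan_sep} with summable coefficients already satisfies $w_j\to0$ at infinity; the decay condition then adds nothing, and I only have to fit the two sphere boundary conditions.

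The computational key is the Legendre generating function $\sum_{m\ge0}t^mP_m(x)=(1-2xt+t^2)^{-1/2}$ evaluated at $t=e^{-|\mu|}$, $x=\cos\eta$, which gives
\begin{align}
\frac{1}{\sqrt{2(\cosh\mu-\cos\eta)}}=\sum_{m=0}^{\infty}e^{-(m+\frac12)|\mu|}\,P_m(\cos\eta).
\label{eq_plan_genfun}
\end{align}
For $w_1$, dividing the boundary condition $w_1=1$ on $\mu=\mu_1$ by the prefactor, expanding the constant $1$ through~\eqref{eq_plan_genfun} (valid since $\mu_1>0$), and matching Legendre coefficients by orthogonality on $[-1,1]$ gives $A_m e^{(m+\frac12)\mu_1}+B_m e^{-(m+\frac12)\mu_1}=e^{-(m+\frac12)\mu_1}$; the condition $w_1=0$ on $\mu=-\mu_2$ gives $A_m e^{-(m+\frac12)\mu_2}+B_m e^{(m+\frac12)\mu_2}=0$. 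Solving this $2\times2$ system for each $m$, substituting the transmitter's coordinates $(\mu_0,\eta_0)$ into~\eqref{eq_plan_sep}, and collecting the exponentials into hyperbolic sines produces the first line of~\eqref{eq_capture_prob}; rerunning the same steps with the data swapped ($0$ on $\rxJ{1}$, $1$ on $\rxJ{2}$) produces the second line. The series converge geometrically because $\mu_1,\mu_2>0$, which also licenses the term-by-term matching; and the sum lies between $0$ and $1$ by the maximum principle, confirming it is the solution sought.

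The part I expect to be most delicate is not the algebra but two points of care. The first is the probabilistic reduction itself: justifying that the quantity we want is exactly the unique \emph{bounded} harmonic function that vanishes at infinity, which is precisely where transience of three-dimensional diffusion enters and where the escape-to-infinity branch must be accounted for. The second is the sign convention built into~\eqref{eq_bispherical}: correctly deciding which receiver lies on the $\mu=+\mu_1$ sheet and which on the $\mu=-\mu_2$ sheet, and with which sign the coordinate $\mu_0$ of $\txJ{1}$ enters, is exactly what distinguishes the $\sinh[(m+\frac12)(\mu_0+\mu_2)]$ of the statement from the superficially similar form with $\mu_2-\mu_0$ that the opposite assignment would produce. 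Everything else is the standard bispherical treatment of two-sphere potential problems, as in the equal-radius case of~\cite{smolu_solution,smolu_laplace_solution}, here specialised to Dirichlet data $\{1,0\}$ and read through the diffusion-to-potential dictionary.
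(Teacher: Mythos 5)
Your proposal is correct and follows essentially the same route as the paper: both reduce the asymptotic capture probabilities to a two-sphere Dirichlet problem for Laplace's equation, map the receiver surfaces to the coordinate surfaces $\mu=\mu_1$ and $\mu=-\mu_2$ via the choices of $a$ and $f$, and invoke the standard $R$-separated bispherical solution. The paper simply cites the reference for the final expansion step, whereas you carry out the generating-function expansion and the $2\times 2$ coefficient matching explicitly (and correctly), so yours is a filled-in version of the same argument.
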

%are expressed in terms of four different asymptotic capture probabilities in the cases of single receiver, $\lim_{t\to\infty}\{\cdfsisot{1}{\txJ{1}}{t},\cdfsisot{2}{\txJ{1}}{t},\cdfsisot{1}{\fixpnew{2}{\ell}}{t},\cdfsisot{2}{\fixpnew{1}{\ell}}{t}\}$, where $\fixpnew{1}{\ell}$ and $\fixpnew{2}{\ell}$ are points on $\rxJ{1}$ and $\rxJ{2}$, respectively. Note that the superscript and the subscript of $G(t)$ indicates the initial location of the molecule and the receiver index, respectively. 

\begin{proof}
		
	Let $k_i(x_0,y_0,z_0)$ denotes an asymptotic capture probability for $\rxJ{i}$, where $\vec{r}\!\!=\!\!(x_0,y_0,z_0)$ is the initial position of the molecule. The asymptotic capture probability satisfies the Laplace equation~\cite{smolu_laplace},
	\begin{equation}
	\begin{aligned}
	\nabla^2 k_1(\vec{r})=0,\\
	\nabla^2 k_2(\vec{r})=0.
	\end{aligned}
	\label{eq_laplace}
	\end{equation}
	The boundary conditions in this case are 
	\begin{equation}
	k_1(\vec{r})\!=\!1,\ k_2(\vec{r})\!=\!0  \ \ \text{when} \ \ d_{\vec{r}}^{\rxJ{1}}=0,
	\label{eq_bcond1}
	\end{equation}
	\begin{equation}
	k_1(\vec{r})\!=\!0,\ k_2(\vec{r})\!=\!1\ \ \text{when} \ \ d_{\vec{r}}^{\rxJ{2}}=0,
	\label{eq_bcond2}
	\end{equation}
	\begin{equation}
	k_1(\vec{r}),k_2(\vec{r}) \to 0 \ \ \text{as} \ \ d_{\vec{r}}^{\rxJ{1}}, d_{\vec{r}}^{\rxJ{2}} \to \infty,
	\label{eq_bcond3}
	\end{equation}
	where $d_{\vec{r}}^{\rxJ{1}}$ and $d_{\vec{r}}^{\rxJ{1}}$ is the distance from the initial point $\vec{r}$ to $\rxJ{1}$ and $\rxJ{2}$, respectively. Without loss of generality, we set $a$ as follows:
	\begin{equation}
	\label{eq_def_a}
	a=\frac{\ell^2+r_{r2}^2-r_{r1}^2}{2\ell},
	\end{equation}
	By setting $f$ and $a$ as~\eqref{eq_def_f} and~\eqref{eq_def_a}, the surface of $\rxJ{1}$ and $\rxJ{2}$ are represented as $\mu\!=\!\mu_1$ and $\mu=-\mu_2$, respectively. This enables us to utilize the typical solutions of the Laplace equation~\cite[p. 
	1299]{smolu_laplace_solution} and derive the solution of~\eqref{eq_laplace} as~\eqref{eq_capture_prob}. 
\end{proof}
It can be easily shown that when $r_{r2}\to0$, the asymptotic capture probability for $\rxJ{2}$ (i.e., $\lim_{t\to\infty}\cdft{2}{\txJ{1}}{t}$ in~\eqref{eq_capture_prob}) goes to 0 as $\mu_2\to\infty$. Moreover, by using~\cite[eqs. (10.3.70)]{smolu_laplace_solution}, we get the following: 
\begin{align}
\begin{split}
\lim_{t\to\infty}\cdft{1}{\txJ{1}}{t} =&\sqrt{\frac{\cosh \mu_0 - \cos\eta_0}{\cosh (\mu_0-2\mu_1) - \cos\eta_0}}\\
=&\frac{r_{r1}}{r_{r1}+d_{\vec{r}}^{\rxJ{1}}}, \ \ \ \text{when} \  r_{r2}\to 0, \\
\end{split} 
\label{eq_capture_prob_limit}
\end{align}
which is the asymptotic capture probability for the single absorbing receiver case~\cite{yilmaz2014threeDC}.

Based on the derived asymptotic capture probability, we provide a derivation of the approximation of the channel model.
We investigate the multi-receiver channel model by considering the possible diffusion paths in the case of a two-way MCvD system. In Fig.~\ref{fig_topology_general}, three possible traversal paths (diffusion paths) for a molecule emitted from $\txJ{1}$ are shown.  $\cdft{2}{\txJ{1}}{t}$ can be expressed as follows:
\begin{equation}
\cdft{2}{\txJ{1}}{t} = \int_{0}^{t}\pdft{2}{\txJ{1}}{t'}dt',
\label{eq_fraction_define}
\end{equation}
where $\pdft{2}{\txJ{1}}{t}$ denotes the instantaneous hitting probability density for the molecules following Path~2.
Note that $\txJ{2}$ will not be considered in the derivation because we can obtain the whole case by superposition. 
\begin{remark}
	$\rxJ{1}$ will be regarded as a non-intended receiver on the path to $\rxJ{2}$ and vice versa.
\end{remark}  
\begin{remark}
	To obtain the probability corresponding to Path~2, we subtract the probability corresponding to Path~3 from the channel model of the one-way MCvD with a single point transmitter, which corresponds to the probability of $\{\text{Path~2} \cup \text{Path~3}\}$.
\end{remark}
\noindent Hence, we consider the instantaneous hitting probability densities for $\rxJ{2}$ as follows:
\begin{equation}
\pdft{2}{\txJ{1}}{t'} = \pdfsisot{2}{\txJ{1}}{t'} - \alpha(t'),
\label{eq_alpha}
\end{equation}
where $\alpha(t')$ corresponds to the instantaneous hitting probability densities for the molecules following Path~3. Additionally, $\pdfsisot{2}{\txJ{1}}{t'}$ corresponds to the single receiver case that is given in the literature as~\eqref{eqn_3d_frac_received_point_src}, i.e., the union of Path~2 and Path~3 as if $\rxJ{1}$ is not in the environment. Therefore, if we calculate $\alpha(t')$, then we can calculate $\pdft{2}{\txJ{1}}{t'}$, which in turn will lead us to $\cdft{2}{\txJ{1}}{t'}$.

By definition, each of the molecules that is moving through Path~3 visits $\rxJ{1}$ at least once. Therefore,  we can segment Path~3 into two parts. For those molecules originating from $\txJ{1}$, we denote the first hitting point on the surface of $\rxJ{1}$ as $\enterp{1}{\txJ{1}}$ and the corresponding first hitting time as $\tau$. Note that $\enterp{1}{\txJ{1}}$ can be an arbitrary point on the surface of $\rxJ{1}$ and $\tau$ can be any real value less than $t'$.

The instantaneous hitting probability density for $\enterp{1}{\txJ{1}}$ on the surface of $\rxJ{1}$ (i.e., the arrival time distribution for $\enterp{1}{\txJ{1}}$ when the molecules are emitted from $\text{Tx}$) is denoted by $\pdfts{1}{\txJ{1}}{t'}{\enterp{1}{\txJ{1}}}$. Since $\enterp{1}{\txJ{1}}$ is the first hitting point on the surface of $\rxJ{1}$, it can be regarded as the starting point of the successive path to $\rxJ{2}$. Hence, \eqref{eq_alpha} can be rewritten as 
\begin{align}
\pdft{2}{\txJ{1}}{t'} \!=\! \pdfsisot{2}{\txJ{1}}{t'} \!-\!\!\! \int_{0}^{t'}\!\!\!\int_{\Omega_1}\!\!\pdfts{1}{\txJ{1}}{\tau}{\enterp{1}{\txJ{1}}}\pdfsisot{2}{\enterp{1}{\txJ{1}}}{t'\!-\!\tau} d\enterp{1}{\txJ{1}} d\tau,
\label{eq_alpha2}
\end{align} 
where $\Omega_1$ indicates the points on the surface of $\rxJ{1}$. To obtain $\pdft{2}{\txJ{1}}{t'}$, we also need to consider $\pdft{1}{\txJ{1}}{t'}$ in a similar way. Thus, we have the following:
%We solve (\ref{eq_alpha2}) simultaneously with \eqref{eq_alpha3} 
\begin{align}
\pdft{1}{\txJ{1}}{t'} \!=\! \pdfsisot{1}{\txJ{1}}{t'} \!-\!\!\! \int_{0}^{t'}\!\!\!\int_{\Omega_2}\!\!\pdfts{2}{\txJ{1}}{\tau}{\enterp{2}{\txJ{1}}}\pdfsisot{1}{\enterp{2}{\txJ{1}}}{t'\!-\!\tau} d\enterp{2}{\txJ{1}} d\tau,
\label{eq_alpha3}
\end{align} 
where $\enterp{2}{\txJ{1}}$ is the first hitting point on $\rxJ{2}$ that is analogous to $\enterp{1}{\txJ{1}}$ for $\rxJ{1}$.

When we apply the mean value theorem for integration to the surface integration in \eqref{eq_alpha2}, we get
\begin{equation}
\pdft{2}{\txJ{1}}{t'} = \pdfsisot{2}{\txJ{1}}{t'}-\! \int_{0}^{t'}\!\!\!\!\pdfsisot{2}{\fixpnew{1}{\prime}(t',\tau)}{t'\!-\!\tau}\!\!\int_{\Omega_1}\!\!\!\pdfts{1}{\txJ{1}}{\tau}{\enterp{1}{\txJ{1}}} d\enterp{1}{\txJ{1}} d\tau,
\label{eq_alpha4}
\end{equation}
where $\fixpnew{1}{\prime}(t',\tau)$ is a point $\in\Omega_1$ .
After the surface integration in \eqref{eq_alpha4}, we obtain
\begin{equation}
\pdft{2}{\txJ{1}}{t'}= \pdfsisot{2}{\txJ{1}}{t'} \!-\! \int_{0}^{t'}\pdfsisot{2}{\fixpnew{1}{\prime}(t',\tau)}{t'\!-\!\tau}\pdft{1}{\txJ{1}}{\tau} d\tau.
\label{pdf_final1}
\end{equation}
In the same way, we can express $\pdft{1}{\txJ{1}}{t'}$ as 
\begin{equation}
\pdft{1}{\txJ{1}}{t'}= \pdfsisot{1}{\txJ{1}}{t'} \!-\! \int_{0}^{t'}\pdfsisot{1}{\fixpnew{2}{\prime}(t',\tau)}{t'\!-\!\tau}\pdft{2}{\txJ{1}}{\tau} d\tau,
\label{pdf_final2}
\end{equation} 
where $\fixpnew{2}{\prime}(t',\tau)$ is a point $\in\Omega_2$.

Note that to derive $\fixpnew{i}{\prime}(t',\tau)$ requires the closed forms of $\pdft{i}{\txJ{1}}{t}$, therefore, it is unattainable to solve the simultaneous equations~\eqref{pdf_final1} and~\eqref{pdf_final2}. Therefore, we set  $\fixpnew{i}{\prime}(t',\tau)$ as a constant $\fixpnew{i}{\ell}\!\in\!\Omega_i$, and then derive the approximation of $\pdft{i}{\txJ{1}}{t}$ by solving~\eqref{pdf_final1} and~\eqref{pdf_final2}. We determine the two constants $\fixpnew{1}{\ell}$ and $\fixpnew{2}{\ell}$ by matching the asymptotic values of approximation forms and~\eqref{eq_capture_prob}.    
By substituting $\fixpnew{i}{\prime}(t',\tau)$ with $\fixpnew{i}{\ell}$ and integrating \eqref{pdf_final1} and \eqref{pdf_final2}, we obtain the following:
 
\begin{align}
\begin{split}
\cdft{2}{\txJ{1}}{t}=\cdfsisot{2}{\txJ{1}}{t}-\int_{0}^{t}\pdfsisot{2}{\fixpnew{1}{\ell}}{t'}*\pdft{1}{\txJ{1}}{t'}  dt' \\
\cdft{1}{\txJ{1}}{t}=\cdfsisot{1}{\txJ{1}}{t}-\int_{0}^{t}\pdfsisot{1}{\fixpnew{2}{\ell}}{t'}*\pdft{2}{\txJ{1}}{t'}  dt'.
\end{split}
\label{cdf_1}
\end{align}
Then, we get
\begin{align}
\begin{split}
\cdft{2}{\txJ{1}}{t}=\cdfsisot{2}{\txJ{1}}{t}-\pdfsisot{2}{\fixpnew{1}{\ell}}{t}*\cdft{1}{\txJ{1}}{t} \\
\cdft{1}{\txJ{1}}{t}=\cdfsisot{1}{\txJ{1}}{t}-\pdfsisot{1}{\fixpnew{2}{\ell}}{t}*\cdft{2}{\txJ{1}}{t}.
\end{split}
\label{cdf_1_app}
\end{align}
Next, we take $t\to\infty$ in~\eqref{cdf_1}, which gives us
\begin{align}
\begin{split}
\lim_{t\to\infty}\cdft{2}{\txJ{1}}{t}&=\lim_{t\to\infty}\cdfsisot{2}{\txJ{1}}{t}-\lim_{t\to\infty}\cdfsisot{2}{\fixpnew{1}{\ell}}{t}  \ \lim_{t\to\infty}\cdft{1}{\txJ{1}}{t}\\
\lim_{t\to\infty}\cdft{1}{\txJ{1}}{t}&=\lim_{t\to\infty}\cdfsisot{1}{\txJ{1}}{t}-\lim_{t\to\infty}\cdfsisot{1}{\fixpnew{2}{\ell}}{t}  \ \lim_{t\to\infty}\cdft{2}{\txJ{1}}{t},
\end{split}
\label{cdf_limit}
\end{align}
In~\eqref{cdf_limit}, $\cdfsisot{2}{\fixpnew{1}{\ell}}{t}$ and $\cdfsisot{2}{\fixpnew{1}{\ell}}{t}$ are expressed as follows:    
\begin{align}
\begin{split}
\cdfsisot{2}{\fixpnew{1}{\ell}}{t}&=\frac{r_{r2}}{r_{r2}\!+\!d_{\fixpnew{1}{\ell}}^{\rxJ{2}}} \,\text{erfc} \left( \frac{d_{\fixpnew{1}{\ell}}^{\rxJ{2}}}{\sqrt{4Dt\,}}\right)\\
\cdfsisot{1}{\fixpnew{2}{\ell}}{t}&=\frac{r_{r1}}{r_{r1}\!+\!d_{\fixpnew{2}{\ell}}^{\rxJ{1}}} \,\text{erfc} \left( \frac{d_{\fixpnew{2}{\ell}}^{\rxJ{1}}}{\sqrt{4Dt\,}}\right),
\end{split}
\end{align}
where $d_{\fixpnew{1}{\ell}}^{\rxJ{2}}$ is the distance from $\fixpnew{1}{\ell}$ to $\rxJ{2}$, and $d_{\fixpnew{2}{\ell}}^{\rxJ{1}}$ is the distance from $\fixpnew{2}{\ell}$ to $\rxJ{1}$.
By solving (\ref{cdf_limit}) simultaneously, we obtain
\begin{align}
\begin{split}
\lim_{t\to\infty}\cdft{2}{\txJ{1}}{t} &=\lim_{t\to\infty}\frac{\cdfsisot{2}{\txJ{1}}{t}-\cdfsisot{2}{\fixpnew{1}{\ell}}{t}\cdfsisot{1}{\txJ{1}}{t}}{1-\cdfsisot{2}{\fixpnew{1}{\ell}}{t}\cdfsisot{1}{\fixpnew{2}{\ell}}{t}} \\
\lim_{t\to\infty}\cdft{1}{\txJ{1}}{t} &=\lim_{t\to\infty}\frac{\cdfsisot{1}{\txJ{1}}{t}-\cdfsisot{1}{\fixpnew{2}{\ell}}{t}\cdfsisot{2}{\txJ{1}}{t}}{1-\cdfsisot{2}{\fixpnew{1}{\ell}}{t}\cdfsisot{1}{\fixpnew{2}{\ell}}{t}}.
\end{split}
\label{proofobj} 
\end{align}
We obtain $d_{\fixpnew{1}{\ell}}^{\rxJ{2}}$ and $d_{\fixpnew{2}{\ell}}^{\rxJ{1}}$ from the fact that~\eqref{eq_capture_prob} is equal to~\eqref{proofobj} (See Appendix A). 

%Next, we take $t\to\infty$ in~\eqref{eq_fraction}, which gives us another form of the asymptotic capture probability as follows: 
%\begin{align}
%\begin{split}
%\lim_{t\to\infty}\cdft{1}{\txJ{1}}{t} &=\frac{\frac{r_{r1}}{r_{r1}\!+\!d_{\txJ{1}}^{\rxJ{1}}}-\frac{r_{r1}}{r_{r1}\!+\!d_{\fixpnew{2}{\prime}}^{\rxJ{1}}}\frac{r_{r2}}{r_{r2}\!+\!d_{\txJ{1}}^{\rxJ{2}}}}{1-\frac{r_{r2}}{r_{r2}\!+\!d_{\fixpnew{1}{\prime}}^{\rxJ{2}}}\frac{r_{r1}}{r_{r1}\!+\!d_{\fixpnew{2}{\prime}}^{\rxJ{1}}}}\\
%\lim_{t\to\infty}\cdft{2}{\txJ{1}}{t} &=\frac{\frac{r_{r2}}{r_{r2}\!+\!d_{\txJ{1}}^{\rxJ{2}}}-\frac{r_{r2}}{r_{r2}\!+\!d_{\fixpnew{1}{\prime}}^{\rxJ{2}}}\frac{r_{r1}}{r_{r1}\!+\!d_{\txJ{1}}^{\rxJ{1}}}}{1-\frac{r_{r2}}{r_{r2}\!+\!d_{\fixpnew{1}{\prime}}^{\rxJ{2}}}\frac{r_{r1}}{r_{r1}\!+\!d_{\fixpnew{2}{\prime}}^{\rxJ{1}}}}.
%\end{split} 
%\label{eq_fraction_limit}
%\end{align}
Fianlly, we derive the approximation forms of $\cdft{1}{\txJ{1}}{t}$ and $\cdft{2}{\txJ{1}}{t}$ based on \eqref{cdf_1_app}, as follows (see Appendix B):
	\begin{align}
	\begin{split}		
%	&\cdft{2}{\txJ{1}}{t}=\frac{-c_{2,1}}{\sqrt{\pi t}}\!\sum_{n=-1}^{-\infty}\bigg[\big\{(A^n\!\!-\!\!1)(d_{\txJ{2}}^{\rxJ{2}}\!\!-\!\!u)+\frac{B}{(A\!\!-\!\!1)\sqrt{D}}(nA^{n+1} \\
%	&\!\!-\!(n+1)A^n\!\!+\!\!1)\big\}e^{\frac{-u^2}{4t}} + (A^n\!\!-\!\!1)\sqrt{\pi t} \mathrm{erf} \left(\frac{u}{\sqrt{4t}}\right)\bigg]^{\frac{d_{\txJ{2}}^{\rxJ{2}}}{\sqrt{D}}+\!nB}_{\frac{d_{\txJ{2}}^{\rxJ{2}}}{\sqrt{D}}+\!(n\!+\!1)B} \\
%	&+\frac{c_{2,2}}{\sqrt{\pi t}}\!\sum_{n=-1}^{-\infty}\bigg[\big\{(A^n\!\!-\!\!1)(b_2\!\!-\!\!u)+\frac{B}{(A\!\!-\!\!1)\sqrt{D}}(nA^{n+1} \\
%	&\!\!-\!(n+1)A^n\!\!+\!\!1)\big\}e^{\frac{-u^2}{4t}} + (A^n\!\!-\!\!1)\sqrt{\pi t} \mathrm{erf} \left(\frac{u}{\sqrt{4t}}\right)\bigg]^{b2+\!nB}_{b2+\!(n\!+\!1)B}, \\
	&\cdft{i}{\txJ{1}}{t}=\frac{-c_{i,1}}{\sqrt{\pi t}}\!\sum_{n=-1}^{-\infty}\bigg[\big\{(A^n\!\!-\!\!1)(\frac{d_{\txJ{1}}^{\rxJ{i}}}{\sqrt{D}}\!\!-\!\!u)+\frac{B}{(A\!\!-\!\!1)}(nA^{n+1} \\
	&\!\!-\!(n+1)A^n\!\!+\!\!1)\big\}e^{\frac{-u^2}{4t}} + (A^n\!\!-\!\!1)\sqrt{\pi t} \mathrm{erf} \left(\frac{u}{\sqrt{4t}}\right)\bigg]^{\frac{d_{\txJ{1}}^{\rxJ{i}}}{\sqrt{D}}+\!nB}_{\frac{d_{\txJ{1}}^{\rxJ{i}}}{\sqrt{D}}+\!(n\!+\!1)B} \\
	&+\frac{c_{i,2}}{\sqrt{\pi t}}\!\sum_{n=-1}^{-\infty}\bigg[\big\{(A^n\!\!-\!\!1)(b_i\!\!-\!\!u)+\frac{B}{(A\!\!-\!\!1)}(nA^{n\!+\!1} 
	\!\!\\
	&-\!(n\!+\!1)A^n\!\!+1)\big\}e^{\frac{-u^2}{4t}} + (A^n\!\!-\!\!1)\sqrt{\pi t} \mathrm{erf} \left(\frac{u}{\sqrt{4t}}\right)\bigg]^{b_i+\!nB}_{b_i+\!(n\!+\!1)B}, \\	
	\end{split} 
	\label{app_final}
	\end{align}
	where $A=\frac{(r_{r1}+d_{\fixpnew{2}{\ell}}^{\rxJ{1}})(r_{r2}+d_{\fixpnew{1}{\ell}}^{\rxJ{2}})}{r_{r1}r_{r2}}$, $B=-\left(d_{\fixpnew{1}{\ell}}^{\rxJ{2}}+d_{\fixpnew{2}{\ell}}^{\rxJ{1}}\right)/\sqrt{D}$,
	$b_1=\left(d_{\fixpnew{2}{\ell}}^{\rxJ{1}}+d_{\txJ{1}}^{\rxJ{2}}\right)/\sqrt{D}$, 
	$b_2=\left(d_{\fixpnew{1}{\ell}}^{\rxJ{2}}+d_{\txJ{1}}^{\rxJ{1}}\right)/\sqrt{D}$,
	$c_{1,1}=\frac{Ar_{r1}}{(A-1)(r_{r1}+d_{\txJ{1}}^{\rxJ{1}})}$, 
	$c_{1,2}=\frac{Ar_{r1}r_{r2}}{(A-1)(r_{r2}+d_{\txJ{1}}^{\rxJ{2}})(r_{r1}+d_{\fixpnew{2}{\ell}}^{\rxJ{1}})}$,
	$c_{2,1}=\frac{Ar_{r2}}{(A-1)(r_{r2}+d_{\txJ{1}}^{\rxJ{2}})}$, $c_{2,2}=\frac{Ar_{r1}r_{r2}}{(A-1)(r_{r1}+d_{\txJ{1}}^{\rxJ{1}})(r_{r2}+d_{\fixpnew{1}{\ell}}^{\rxJ{2}})}$, 
	and  $\big[H(u)\big]^a_b=H(a)-H(b)$.

 \begin{figure}[!t]
 	\begin{center}
 		\includegraphics[width=0.98\columnwidth,keepaspectratio]%
 		{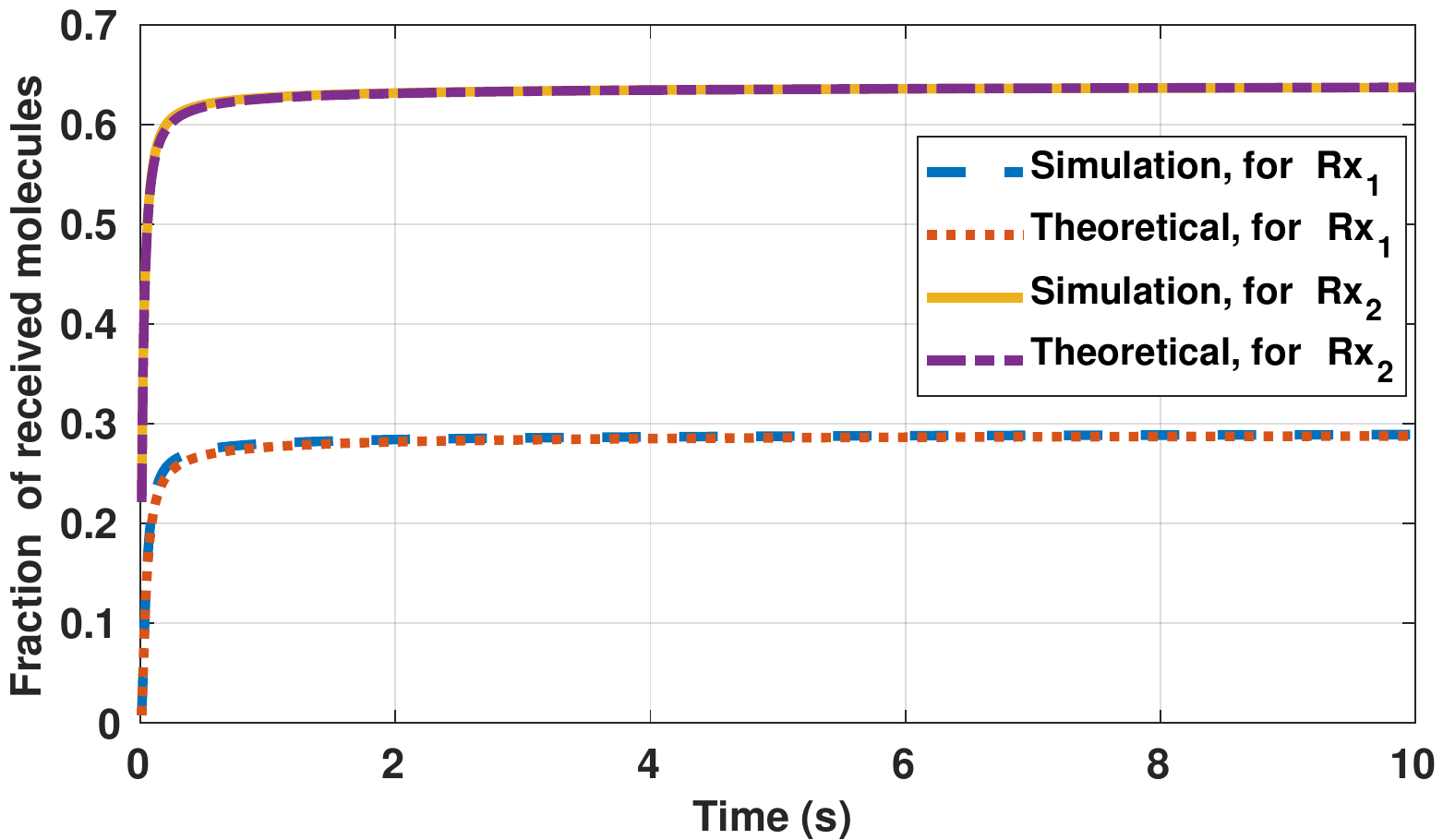}
 		\caption{A comparison of the theoretical fraction of received molecules with the simulation data ($d_1$=$d_2$=$\SI{1.5}{\micro\meter}$).} 
 		\label{fig_ch_fitting}
 	\end{center}
 \end{figure}
 \begin{table}
 	\caption{Simulation Parameters for Channel Model Verification}
 	\begin{center}
 		\begin{tabular}{lcc}
 			\hline\hline
 			System Parameter &Notation &Values \\
 			\hline
 			Distance ($\txJ{i}$ to $\rxJ{i}$) &$d_1$&\SI{1.5}{\micro\meter} \\
 			Distance ($\txJ{i}$ to $\rxJ{j}$)&$\dptRxJ{\txJ{1}}{2}$ &\SI{5}{\micro\meter}-$d_1$\\
 			Number of Molecules for Bit 1 &$\Ntx$ &50000 \\
 			Diffusion Coefficient &D &$\SI{100}{\micro\meter^2/\second}$ \\
 			Radius of Receiver &$r_{r1}$=$r_{r2}$ & \SI{5}{\micro\meter} \\
 			Simulation Time Step & $\Delta t$ & $10^{-5}\si{\second}$ \\
 			Simulation Duration & &\SI{0.1}{\second} \\
 			Simulation Replication & &10 \\
 			Number of the terms in \eqref{eq_capture_prob} and \eqref{app_final} &m, n & 100000\\
 			\hline \hline
 		\end{tabular}
 	\end{center}
 	\label{tab_simulation_parameter}
 \end{table}

\subsection{Channel Model Verification}

In this Section, we verify our channel model by particle-based simulation. We repeat the simulation for a number of times, and take the mean value with respect to the number of trials. The number of trials is presented in Tables II and III. In each simulated trial, 50000 molecules are released. The received molecules are distinguished  according to the transmitter that emits them. In the simulations, only $\txJ{1}$ releases molecules which is sufficient to verify the theoretical channel model.
%%%%%%%%%%%%%
%%%%%%%%%%%%%%

%%%%%%%%%%%%%

To implement Brownian motion for the emitted molecules, our simulator records and updates the position of each molecule at each time step $\Delta t$. The position of the emitted molecules, $\mathbf{X}_p(t)$, changes by $\Delta \mathbf{X}_p$ after simulation time step $\Delta t$ as in (\ref{eq_simul})\cite{birkan2014simulation}. The simulation parameters used for verification of the channel model are given in Table~\ref{tab_simulation_parameter}.
\begin{align}
\begin{split}
\mathbf{X}_p(t+\Delta t)&=\mathbf{X}_p(t)+\Delta \mathbf{X}_p \\
          &=\mathbf{X}_p(t)+(\Delta x,\Delta y,\Delta z)  \\
\Delta x,\Delta y,\Delta z &\sim \mathscr{N}(0,2D\Delta t).
\end{split}
\label{eq_simul}
\end{align}

%%%%%%%%%%%%
%%%%%%%%%%%%
Through extensive simulations, we obtain the fraction of received molecules for each receiver (i.e., $\rxJ{1}$ and $\rxJ{2}$) at each time step during the simulation time, i.e., $\cdft{2,\text{sim}}{\txJ{1}}{t}$ and $\cdft{1,\text{sim}}{\txJ{1}}{t}$, respectively. Then, we use (\ref{app_final}) to calculate $\cdft{2}{\txJ{1}}{t}$ and $\cdft{1}{\txJ{1}}{t}$. We compare the analytical results with the simulation results, as shown in Fig.~\ref{fig_ch_fitting}. It is worth noting that the asymptotic capture probabilities (i.e., $\lim_{t\to\infty}\cdft{1}{\txJ{1}}{t}$ and $\lim_{t\to\infty}\cdft{2}{\txJ{1}}{t}$ in~\eqref{eq_capture_prob}) for Fig.~\ref{fig_ch_fitting} are $0.6414$ and $0.2932$, respectively.
By the long-term simulation (i.e., $t=1000s$), we observe that the simulation results and~\eqref{app_final} converge to those values.
We depict the number of received molecules in Fig.~\ref{fig_ch_fitting_f}, which corresponds to the impulse response of the channel. Theoretical values in Fig.~\ref{fig_ch_fitting_f} at $t=k\Delta t$ are obtained by calculating 
\begin{equation}
\Ntx\left(\cdft{i}{\txJ{1}}{(k+1) \Delta t}-\cdft{i}{\txJ{1}}{k\Delta t}\right).
\end{equation}
Using the derived channel model, we substitute $\pksys{ij}{k}$ and $\phipksys{ij}{k}$ as follows:
\begin{align}
\begin{split}
&\pksys{ij}{k} = \cdft{j}{\txJ{i}}{(k+1)t_s}-\cdft{j}{\txJ{i}}{kt_s},\\
&\phipksys{ij}{k} = \cdft{j}{\txJ{i}}{(k+1)t_s}-\cdft{j}{\txJ{i}}{kt_s+T_c}.
\end{split}
\end{align}
In Fig.~\ref{fig_ch_taps}, we depict the comparison of the channel coefficients $\pksys{ij}{k}$.
%%%%%%%%%%%%%%%%

\begin{figure}[!t]
	\begin{center}
		\includegraphics[width=0.98\columnwidth,keepaspectratio]%
		{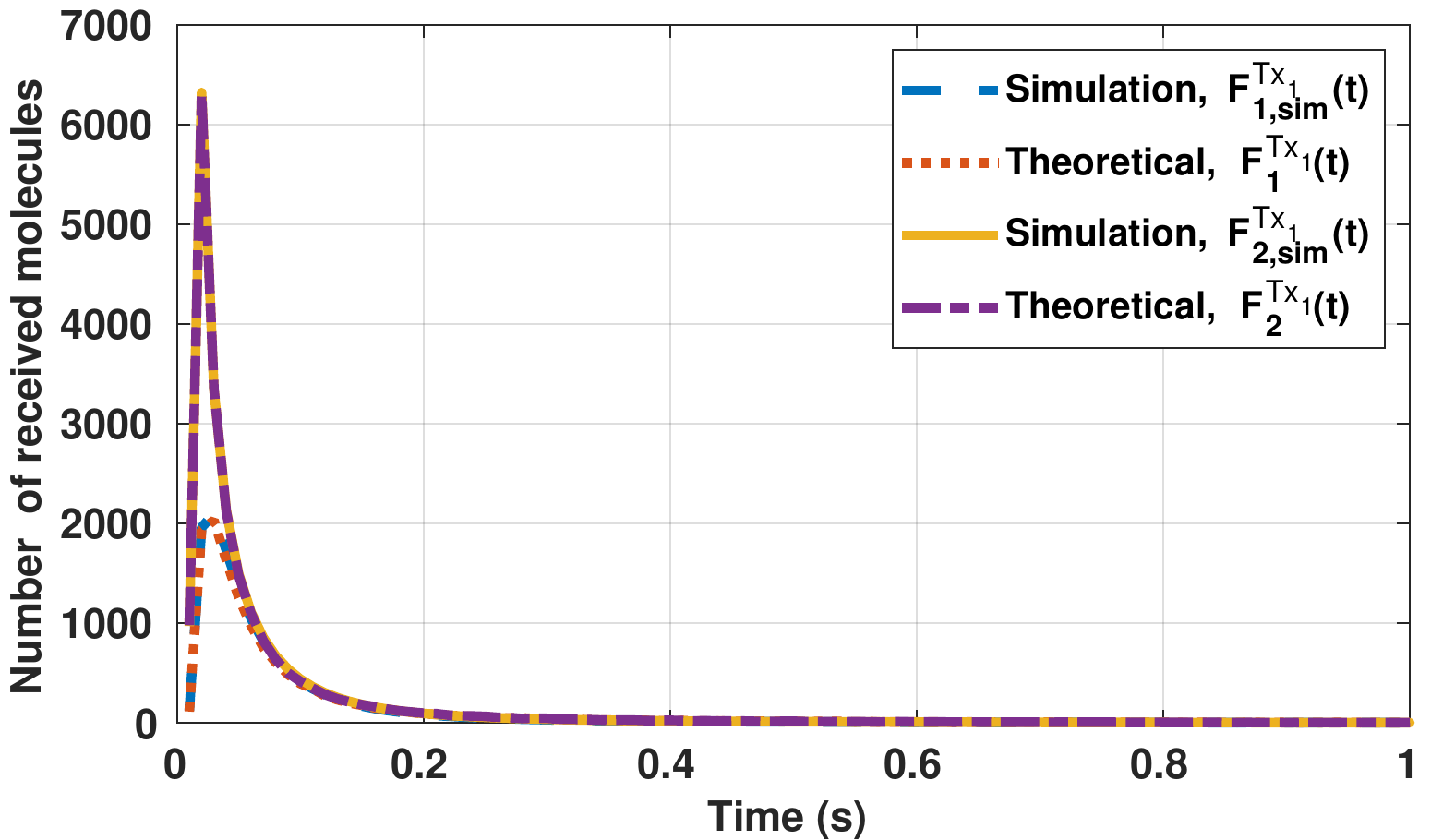}
		\caption{A comparison of the theoretical number of received molecules with the simulation data ($d_1$=$d_2$=$\SI{1.5}{\micro\meter}$).} 
		\label{fig_ch_fitting_f}
	\end{center}
\end{figure}
\begin{figure}[!t]
	\begin{center}
		\includegraphics[width=0.98\columnwidth,keepaspectratio]{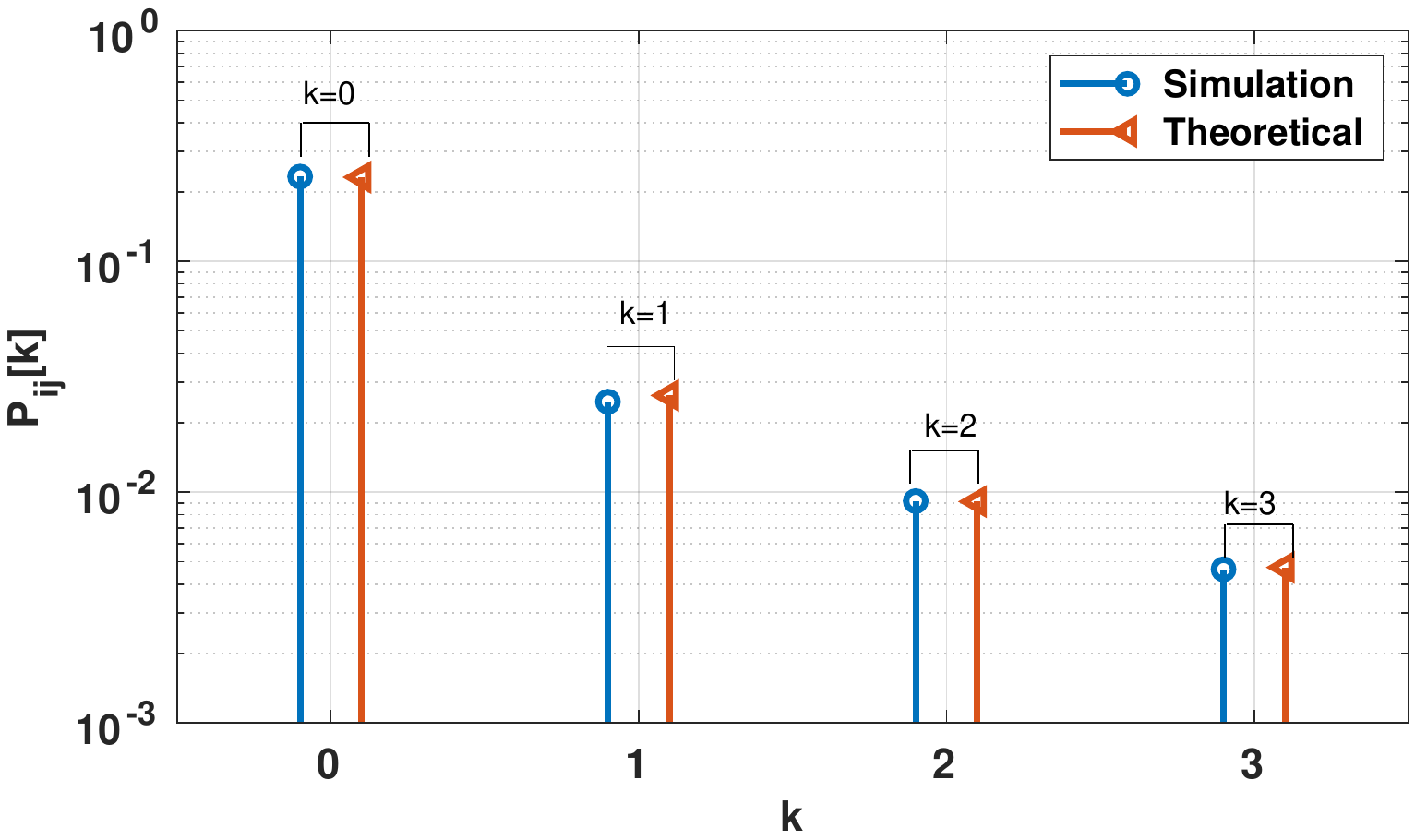}%
		
		\caption{A comparison of the theoretical channel coefficients with the simulation data ($t_s=\SI{0.150}{\second}$, $d_1$=$d_2$=$\SI{1.5}{\micro\meter}$,\ $r_{r1}$=$r_{r2}$=$\SI{5}{\micro\meter}$).}
		\label{fig_ch_taps}
	\end{center}
\end{figure}
%\begin{table}
%	\caption{RMSE values between derived theoretical model and simulation results}
%	\label{tab_rmse}
%	\begin{center}
%		\begin{tabular}{ccc}
%			\hline\hline
%			System Parameter&Formula &RMSE \\
%			&&Derived Model\\
%			\hline
%			\multirow{2}{*}{$d_1$=$d_2$=\SI{1.5}{\micro\meter} }& $\cdft{1}{\txJ{1}}{t}$  &$3.60\times 10^{-3}$\\
%			& $\cdft{2}{\txJ{1}}{t}$ &$2.98\times 10^{-4}$ \\ 
%			\hline 
%			\multirow{2}{*}{$d_1$=$d_2$=\SI{2}{\micro\meter} } & $\cdft{1}{\txJ{1}}{t}$  &$2.1\times 10^{-3}$ \\
%			& $\cdft{2}{\txJ{1}}{t}$ &$5.1\times 10^{-4}$ \\ 
%			\hline \hline
%		\end{tabular}
%	\end{center}
%\end{table}
%To verify the derived channel model, we evaluate the root-mean-square error (RMSE) between $\cdft{i}{\txJ{1}}{t}$ and $\cdft{i,\text{sim}}{\txJ{1}}{t}$ as follows:
%\begin{equation}
%\text{RMSE}(\cdft{i}{\txJ{1}}{t})
%=
%\sqrt{\frac{\sum_{t\in \mathscr{T}}( \cdft{i}{\txJ{1}}{t}- \cdft{i,\text{sim}}{\txJ{1}}{t})^2}{|\mathscr{T}|}},
%\label{eq_rmse}
%\end{equation}
%where $\mathscr{T}$ is the set of time samples and $|\cdot|$ is the cardinality of a set. The elements of $\mathscr{T}$ are selected according to the simulation end time. Table~\ref{tab_rmse} shows the evaluted RMSE values.

\subsection{BER Formula for Two-Way Molecular Communication}

In this section, we formulate the BER in terms of the Q-function (i.e., the tail probability of the standard normal distribution). For the receiver $\rxJ{j}$, an error occurs when the result of decoding is different from the bit transmitted from the $\txJ{i}$. If $\txJ{i}$ encodes bit-1, an error occurs when $\yrxJN{j}{n}$ is less than the detection threshold $\tau_d$. If $\txJ{i}$ encodes bit-0, then an error occurs when $\yrxJN{j}{n}$ is greater than the detection threshold $\tau_d$. 

\begin{figure*}[!t]
	\begin{center}	
		\subfigure[Half-duplex system]	{\includegraphics[width=0.94\columnwidth,keepaspectratio]%
			{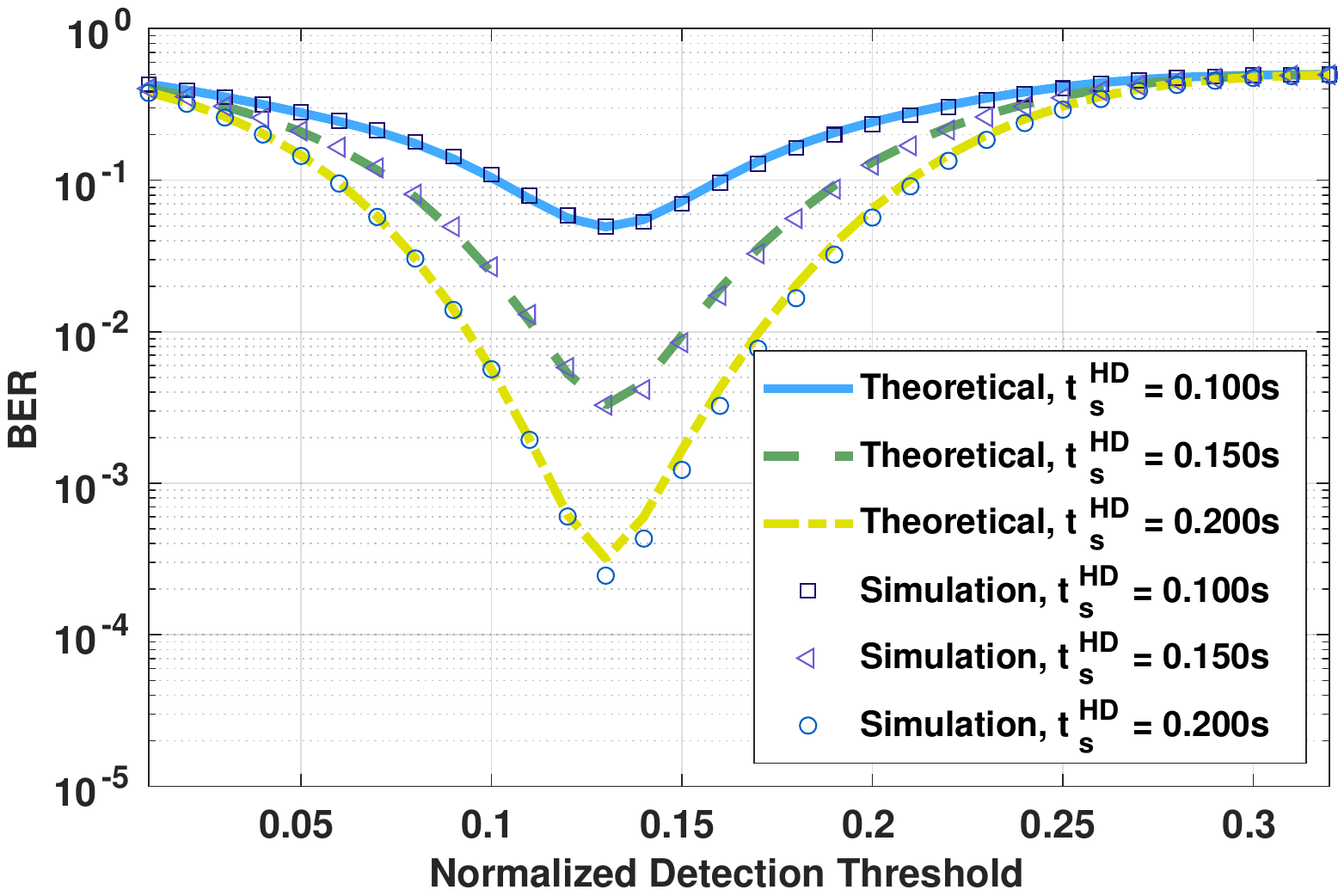}
			\label{fig_ber_thr_hd}}{\phantom{123}}
		\subfigure[Full-duplex system]
		{\includegraphics[width=0.94\columnwidth,keepaspectratio]%
			{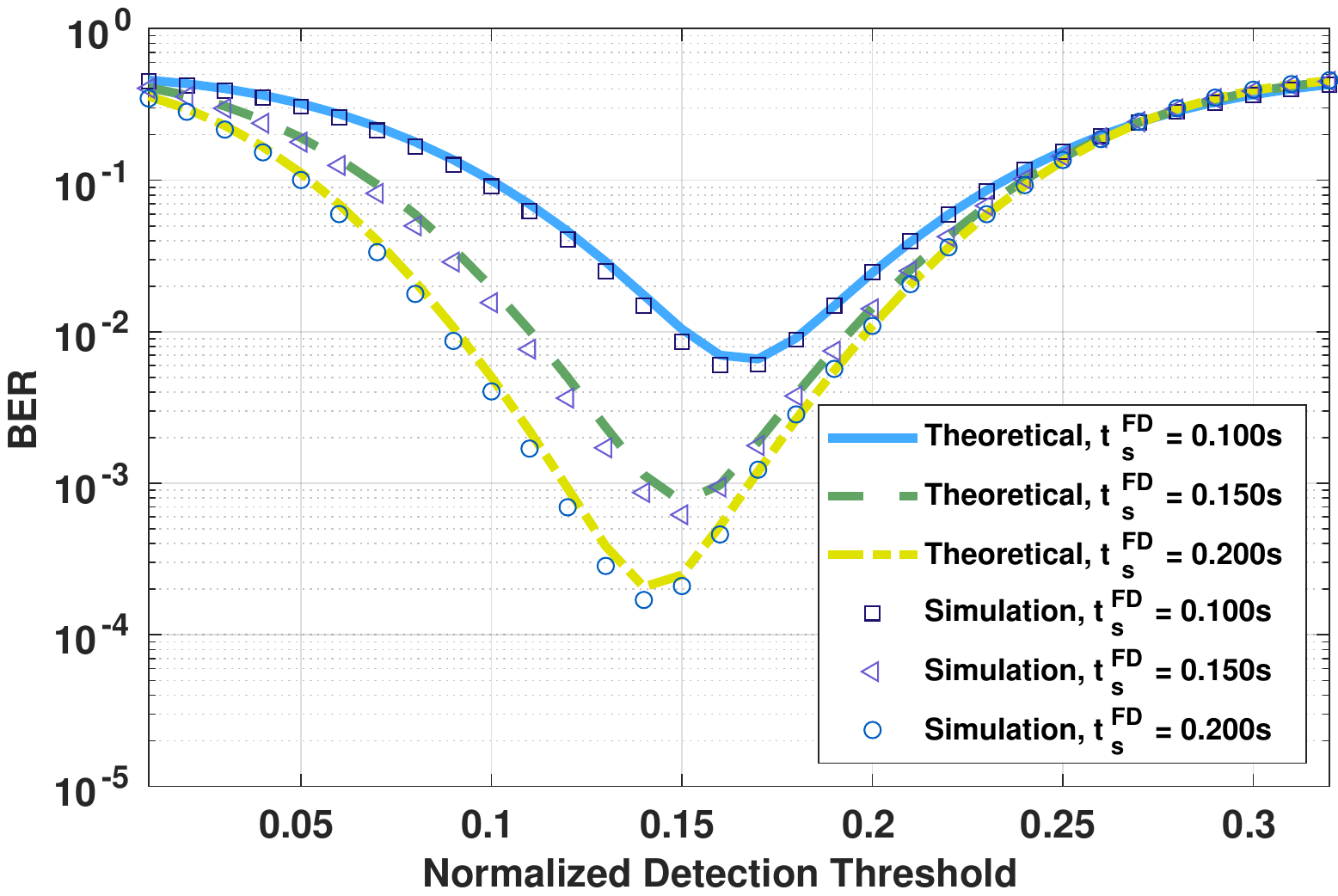}
			\label{fig_ber_thr}}
		\caption{(a)BER comparison of the simulation data and the theoretical analysis of the half-duplex system using BCSK (b) BER comparison of the simulation data and the theoretical analysis of the full-duplex system with D-SIC ($\Ntx$=500, $d_1$=$d_2$=$\SI{1.5}{\micro\meter}$,\ $r_{r1}$=$r_{r2}$=$\SI{5}{\micro\meter}$).}
	\end{center}
\end{figure*}
\begin{table}[t]
	\caption{Simulation Parameters for BER and Throughput Comparison}
	\label{table_3}
	\begin{center}
		\begin{tabular}{ccc}
			\hline\hline
			System Parameter &Notation &Values \\
			\hline
			Distance ($\txJ{i}$ to $\rxJ{i}$) &$d_1$=$d_2$ &\SI{1.5}{\micro\meter}\\
			Distance ($\txJ{i}$ to $\rxJ{j}$)&$\dptRxJ{\txJ{2}}{1}$=$\dptRxJ{\txJ{1}}{2}$ &\SI{5}{\micro\meter}-$d_1$\\
			Number of Molecules for Bit 1 &$\Ntx$ &300, 400, 500 \\
			Diffusion Coefficient &D &$\SI{100}{\micro\meter^2/\second}$ \\
			Radius of Receiver &$r_r$ & $\SI{5}{\micro\meter}$ \\
			Simulation Time Step & $\Delta t$ & $10^{-5}\si{\second}$ \\
			Molecular Noise Variance &$\sigma_{\text{noise}}^2$ & $100$\\
			Considered ISI Period & & 0.6s\\ 
			Number of the terms in \eqref{eq_capture_prob} and \eqref{app_final} &m, n & 100000\\
			Replication & &10 \\
			\hline \hline
		\end{tabular}
	\end{center}
\end{table}
Considering the transmitted bit sequences $x_i$ and $x_j$, we obtain the error probabilities at the $n\text{th}$ symbol slot as (\ref{eq_ber}) where $x_i[1\!:\! n \!\sminus\! 1]$ denotes the bits transmitted previously from $\txJ{i}$:
\begin{align*}
\pe{j,BCSK} &=\pec{x_i[n]=1}{j,BCSK} +\pec{x_i[n]=0}{j,BCSK}  \\
       =& \!\!\sum_{ x_i[1:n\!\sminus\!1]}\sum_{ x_j}
         {P}_{i,1} \,P(\yrxJNsys{j}{n} \!\leq\!\tau_d \boldsymbol{|} x_i[n]\!=\!1,x_i,x_j) \\
       +& \!\sum_{ x_i[1:n\!\sminus\!1]}\sum_{ x_j}
         {P}_{i,0} \,{P}(\yrxJNsys{j}{n} \!>\!\tau_d \boldsymbol{|} x_i[n]\!=\!0,x_i,x_j) \\
       =&  \!\sum_{ x_i[1:n\!\sminus\!1]}\sum_{ x_j}
        {P}_{i,1} \,Q\left( \frac{\mu_{n,\text{total}|x_i,x_j}\!\sminus\!\tau_d}
                {\sigma^2_{n,\text{total}|x_i,x_j}}\right) \\
       +&  \!\sum_{ x_i[1:n\!\sminus\!1]}\sum_{ x_j}
       {P}_{i,0} \,Q\left( \frac{\tau_d \!\sminus\! \mu_{n,\text{total}|x_i,x_j}}
                {\sigma^2_{n,\text{total}|x_i,x_j}}\right),  \numberthis
\label{eq_ber}
\end{align*}
where $\mu_{n,\text{total}}$ and $\sigma^2_{n,\text{total}}$ are defined in~\eqref{yrxjn}, $Q(\cdot)$ is the Q-function, and 
\begin{align}
\begin{split}
{P}_{i,1} &= {P}(x_i[n]\!=\!1) \, {P}(x_i[1\!:\!n\!\sminus\!1]) \, {P}(x_j[1:n]) \\
{P}_{i,0} &= {P}(x_i[n]\!=\!0) \, {P}(x_i[1\!:\!n\!\sminus\!1]) \, {P}(x_j[1:n]).
\end{split}
\label{eq_summation}
\end{align}
Note that $\sum_{ x_i[1:n\!\sminus\!1]}$ and $\sum_{ x_j}$ in~\eqref{eq_ber} indicate the summation over all possible bit sequences $x_i[1\!\!:\!\! n \!\sminus\!1]$ and $x_j[1\!\!:\!\!n]$, respectively. The BER expression for the full-duplex system with the proposed SIC techniques is obtained by substituting $\yrxJN{j}{n}$, $\mu_{n,\text{total}|x_i,x_j}$ and $\sigma^2_{n,\text{total}|x_i,x_j}$ in~\eqref{eq_summation} with $\yrxJNphi{j}{n}$, $\mu^{\varphi}_{n,\text{total}|x_i,x_j}$, and $(\sigma^{\varphi}_{n,\text{total}})^2$, respectively.

%%%%%%%%%%%%%%%%%%%%%%%%%%%%%%%%%%%%%%%%%%%%%%%%%%%%%%%%%%%%%%%%%%%%%%%%%%%%%%%%%%
%%%%%%%%%%%%%%%%%%%%%%%%%%%%%%%%%%%%%%%%%%%%%%%%%%%%%%%%%%%%%%%%%%%%%%%%%%%%%%%%%%
%%%%%%%%%%%%%%%%%%%%%%%%%%%%%%%%%%%%%%%%%%%%%%%%%%%%%%%%%%%%%%%%%%%%%%%%%%%%%%%%%%
%%%%%%%%%%%%%%%%%%%%%%%%%%%%%%%%%%%%%%%%%%%%%%%%%%%%%%%%%%%%%%%%%%%%%%%%%%%%%%%%%%
%%%%%%%%%%%%%%%%%%%%%%%%%%%%%%%%%%%%%%%%%%%%%%%%%%%%%%%%%%%%%%%%%%%%%%%%%%%%%%%%%%
\section{Particle-Based Simulation Results}
\label{sec_numerical_results}
In this section, we compare the proposed half-duplex and full-duplex systems in terms of BER and throughput. Throughput of the systems are evaluated as follows:
\begin{equation}
Throughput=\frac{M(1-{P}_e)}{t_s},
\label{eq_throughput}
\end{equation}
where $M$ is the number of bits transmitted in one symbol, ${P}_e$ is the BER of the system, and $t_s$ denotes the symbol duration. The system parameters for the rest of our work are summarized in Table~\ref{table_3}. For convenience, we denote $t_s$ of the half-duplex and full-duplex systems as $t^{\text{HD}}_s$ and $t^{\text{FD}}_s$, respectively. In Table~\ref{table_bcsk_bcsk} and Table~\ref{table_qcsk_bcsk}, throughput ratio is calculated as a ratio of the throughput of full-duplex system and the half-duplex system.
%%
%%%%%%%%%%%%%
%%%%%%%%%%%
%%%%%%%%%%%%%%%%%%%

%%%%%%%%%%%%%

In the half-duplex system, each receiver operates only when the paired transmitter releases the molecular signal. Hence, the operating time of the receiver (i.e., detection period) is half of the symbol duration. In the full-duplex system, the detection period of each receiver is equal to the symbol duration. Roughly, we can expect faster but less accurate communications in the full-duplex system if we use the same modulation technique and detection period for both systems. By the theoretical and simulation BER analysis, we confirm that the proposed SIC techniques are necessary in the full-duplex system. Therefore we analyze the BER improvement in the full-duplex system with SIC to find, numerically, the optimal values for the normalized detection threshold and the discarding time of the SIC to minimize BER. Through the numerical parameter optimization, the throughput of the full-duplex system with SIC is compared to the half-duplex system. For a fair comparison, we evaluate the throughput in the following three cases, considering that the throughput is a function of M, $t_s$, and $P_e$:
\begin{enumerate}
\item Half-duplex system (BCSK) vs. full-duplex system (BCSK) with SIC, where $t^{\text{FD}}_s\!=\!t^{\text{HD}}_s/2$.
\item Half-duplex system (BCSK) vs. full-duplex system (BCSK) with SIC. We set the same $t_s$ for both systems (i.e., $t^{\text{FD}}_s\!=\!t^{\text{HD}}_s$).
\item Half-duplex system (BCSK) vs. full-duplex system (BCSK) with SIC. We empirically adjust $t^{\text{FD}}_s$ to make the BER of both systems close.
\item Half-duplex system using quadrature concentration shift keying (QCSK) vs full-duplex system (BCSK) with SIC, where $t^{\text{FD}}_s\!=\!t^{\text{HD}}_s/2$.
\end{enumerate}

%%%%%%%%%%%%%%%%%%

%%%%%%%%%%%%
\subsection{BER Analysis} 
Fig.~\ref{fig_ber_thr_hd} depicts the simulation and theoretical BERs of the half-duplex system using BCSK. The \emph{x}-axis is the normalized threshold ($\tau_m$), which is $\tau_d/\Ntx$. First of all, the simulation and theoretical values match each other well. 
Since the half-duplex system is not susceptible to SI, we do not need to apply the proposed SIC techniques to this system. On the other hand, we observe from Fig.~\ref{fig_sic} that the BER of the full-duplex system is nearly 0.3 if we do not apply the SIC techniques. In Fig.~\ref{fig_ber_thr_hd}, we can see the optimal normalized threshold $\tau_m^*$ for different $t^{\text{HD}}_s$ and we observe that it is slightly changing according to the value of $t^{\text{HD}}_s$. We also observe that the BER gain is relatively higher for changing $t^{\text{HD}}_s$ from \SI{0.100}{\second} to \SI{0.150}{\second} compared to from \SI{0.150}{\second} to \SI{0.200}{\second} due to the relative ISI difference.

\begin{figure}[!t]
	\begin{center}
		\includegraphics[width=0.94\columnwidth,keepaspectratio]%
		{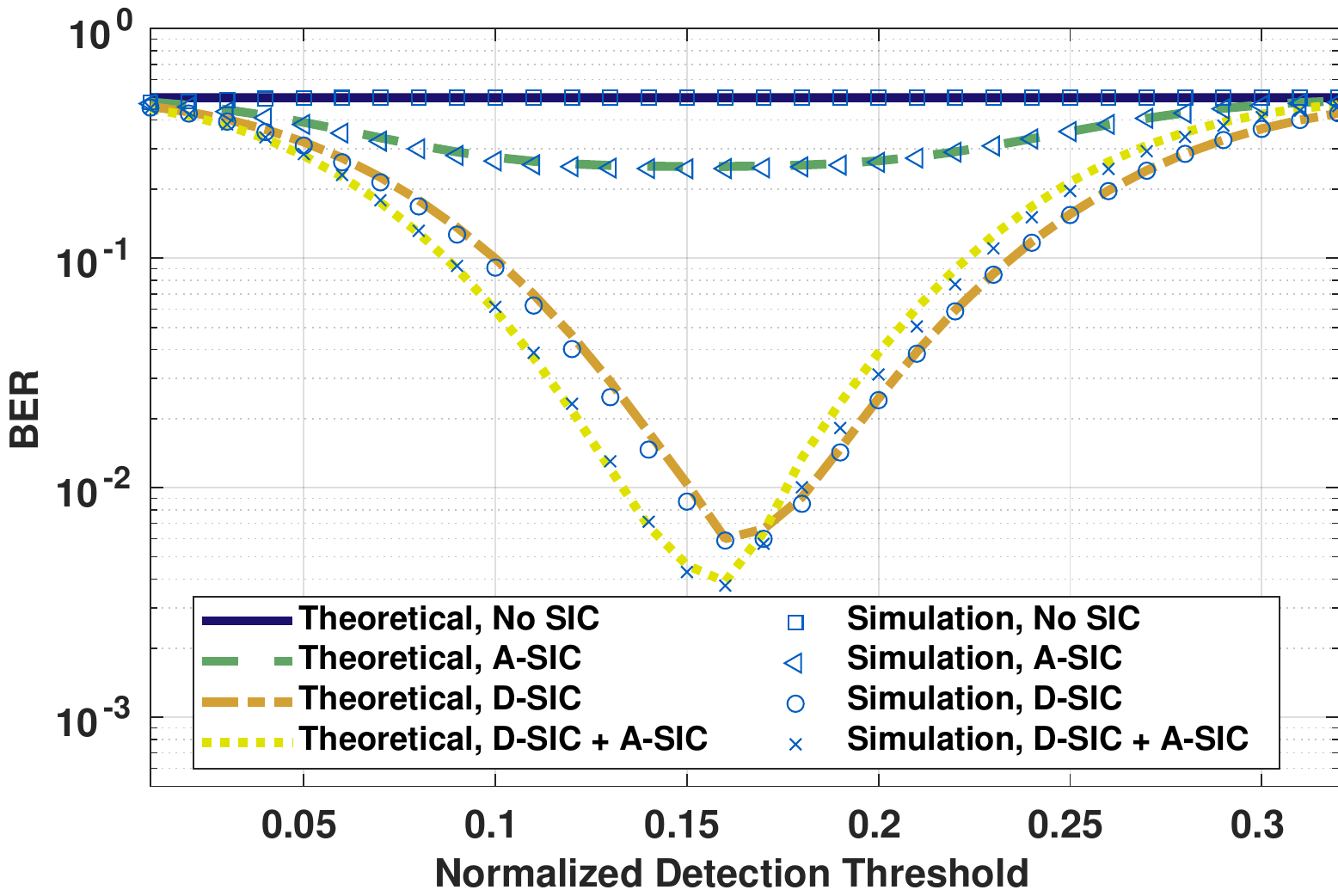}
		\caption{BER of the full-duplex system with the different SIC techniques ($\Ntx$=500, $t_s^{\rm{FD}}$= 0.1s, $r_{r1}$=$r_{r2}$=$\SI{5}{\micro\meter}$). For the no-SIC case, we set $d_1$=$d_2$=$\SI{0}{\micro\meter}$. For other cases, we set $d_1$=$d_2$=$\SI{1.5}{\micro\meter}$.}
		\label{fig_sic}
	\end{center}
\end{figure}
Fig.~\ref{fig_ber_thr} shows the simulation and theoretical BER of the full-duplex system with D-SIC. We can see that there is an optimal normalized detection threshold $\tau_m^*$ for different $t^{\text{FD}}_s$ values.
Similar to the case of the half-duplex system, there is some similarity between the $\tau_m^*$ changes according to the $t^{\text{FD}}_s$ value and also the tendency of the BER gain with respect to
\begin{figure}[!t]
	\begin{center}
		\includegraphics[width=0.82\columnwidth,keepaspectratio]%
		{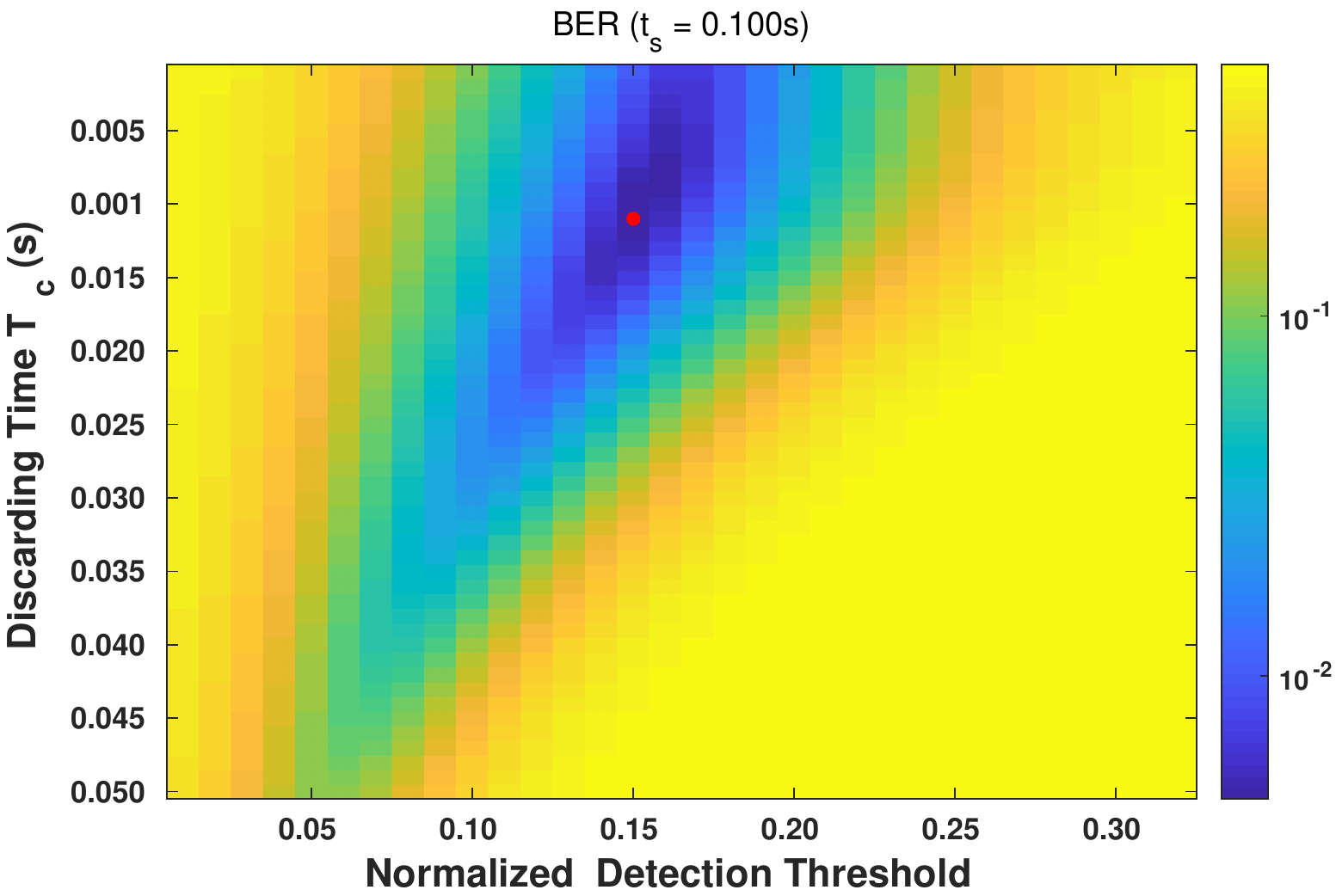}      
		\caption{Theoretical BER heatmap of the full-duplex system with A-SIC and D-SIC. The red mark indicates the optimal parameters for the normalized detection threshold and the discarding time.}
		\label{fig_heatmap}  
	\end{center}
\end{figure}
the $t^{\text{FD}}_s$ difference (see Fig.~\ref{fig_ber_thr_hd}).

%\begin{figure}[!t]
%	\begin{center}
%		\subfigure[$t^{\text{FD}}_s\!=\!t^{\text{HD}}_s/2$, $\Ntx$=300]		{\includegraphics[width=0.80\columnwidth,keepaspectratio]%
%			{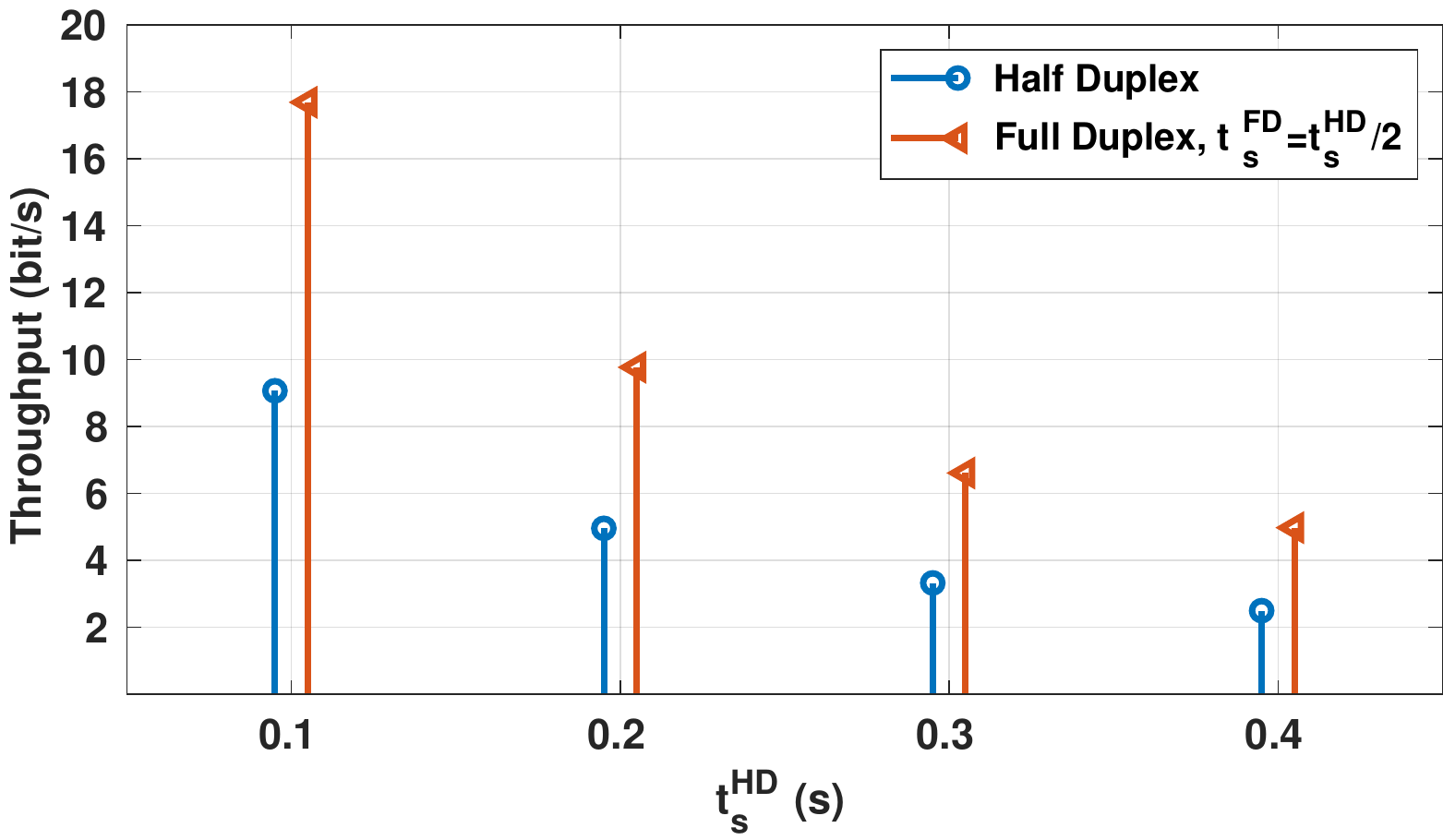}
%			\label{fig_bcsk_bcsk_ts_half}}
%		\subfigure[set $t_s$, $\Ntx$=300]		{\includegraphics[width=0.80\columnwidth,keepaspectratio]%
%			{}
%			\label{fig_bcsk_bcsk_ts}}
%		\subfigure[set BER, $\Ntx$=300]		{\includegraphics[width=0.80\columnwidth,keepaspectratio]%
%			{}
%			\label{fig_bcsk_bcsk_ber}}
%		\caption{Theoretical BER of the half-duplex system (BCSK) and the full-duplex system with optimized D-SIC and A-SIC (BCSK) where (a) $t^{\text{FD}}_s\!=\!t^{\text{HD}}_s/2$, (b) $t^{\text{FD}}_s\!=\!t^{\text{HD}}_s$.  (c) We adjust $t^{\text{FD}}_s$ as per Table~\ref{table_bcsk_bcsk} to make the BER of both systems the same level and compare the throughput. We adjust $t^{\text{FD}}_s$ in the range of $\SI{0.05}{\second}$ to $10^5\SI{}{\second} $. When $\Ntx$ is 500 and $t^{\text{HD}}_s$ is $\SI{0.400}{\second}$, it is infeasible to make their BER the same. }
%		\label{fig_bcsk_bcsk}
%	\end{center}
%\end{figure} 

Fig.~\ref{fig_sic} depicts the simulation and theoretical BERs of the full-duplex system while applying the different SIC techniques. The \emph{x}-axis is the normalized detection threshold $\tau_d/\Ntx$. 
%Roughly, applying A-SIC is equivalent to subtract $\Ntx(\cdft{i}{\txJ{i}}{T_c}+\cdft{i}{\txJ{j}}{T_c})$ from the number of received molecules for $\rxJ{i}$ (it is not true because the molecules emitted in the previous symbol slots are not considere here). Similarly, applying D-SIC is equivalent to subtract $\Ntx(\cdft{i}{\txJ{i}}{t_s}-\cdft{i}{\txJ{i}}{T_c})$.
The performance of D-SIC is superior to the performance of A-SIC, because D-SIC deals with a wider time interval (i.e., when the discarding time is $T_c$, A-SIC ignores the molecules received in $[0,T_c]$, and D-SIC subtracts the expected number of the SI molecules received in $[T_c,t_s]$). For the no-SIC case, we assume that $\txJ{i}$ is located on the surface of $\rxJ{i}$ (i.e., $d_1$=$d_2$=$\SI{0}{\micro\meter}$), and we observe that the BER is nearly 0.5. The results indicate that the separation of transmitter and receiver is necessary in the two-way MCvD system.
We observe that the BER of the full-duplex system with A-SIC is nearly 0.3. On the other hand, the BER of the full-duplex system with D-SIC becomes comparable with the half-duplex system in Fig.\ref{fig_ber_thr_hd}. 
Moreover, we observe that when we apply both SIC techniques, the BER is slightly improved compared to the full-duplex system with only D-SIC. 
As we stated in~\eqref{yrxjn} and~\eqref{phiyrxjn}, the number of received SI molecules from the current symbol can be modeled as a Gaussian random variable that has a variance $\Ntx\phipksys{ii}{0}(1-\phipksys{ii}{0})x_i[n]$. 
Since the variance decreases as $\phipksys{ii}{0}$ decreases (when $\phipksys{ii}{0}\leq 0.5$), applying A-SIC can reduce the uncertainty of predicting the number of received SI molecules from the current symbol.
For the full-duplex systems, since the BER of the A-SIC-only system and the no-SIC system are not reasonable, we consider those full-duplex systems with only D-SIC or D-SIC and A-SIC. 

%%%%%%%%%%%%%%

%%%%%%%%%%%%
%%%%%%%%%%%%%%
%%%%%%%%%%%%

As derived in Section~\ref{sec_ber_formulations}, the BER is a function of symbol duration ($t_s$), detection threshold ($\tau_d$), the number of molecules for encoding bit-1 ($\Ntx$), and the discarding time of molecular received signal for A-SIC ($T_c$). Since $\Ntx$ and $t_s$ are system parameters, we consider only $\tau_d$ and $T_c$ as variables to optimize. While we will show that optimal values for these parameters exist, we cannot derive them in closed-form and hence must resort to evaluating them numerically. To improve BER, we first need to see the structure of the BER of the full-duplex system with SIC in terms of $\tau_d$ and $T_c$ by the following analysis. 

\begin{table*}
	\caption{Throughput ratio (FD/HD), $t_s$ and BER values of the half-duplex system using BCSK and the full-duplex system using BCSK.}
	\label{table_bcsk_bcsk}
	\begin{center}
		\begin{tabular}{cccccccccccccc}
			\hline\hline
			\multicolumn{2}{c|}{}&\multicolumn{4}{c|}{$\Ntx=300$} &\multicolumn{4}{c|}{$\Ntx=400$} &\multicolumn{4}{c}{$\Ntx=500$} \\ 
			\hline
			
			\multicolumn{1}{c|}{}&\multicolumn{1}{c|}{$t^{\text{HD}}_s\!=\!2t^{\text{FD}}_s$} &$\SI{0.100}{\second}$&$\SI{0.200}{\second}$&$\SI{0.300}{\second}$&\multicolumn{1}{c|}{$\SI{0.400}{\second}$}&$\SI{0.100}{\second}$&$\SI{0.200}{\second}$&$\SI{0.300}{\second}$&\multicolumn{1}{c|}{$\SI{0.400}{\second}$}&$\SI{0.100}{\second}$&$\SI{0.200}{\second}$&$\SI{0.300}{\second}$&\multicolumn{1}{c}{$\SI{0.400}{\second}$}\\
			
			\multicolumn{1}{c|}{$\text{Case 1}$}&\multicolumn{1}{c|}{BER (FD)} &0.1155&0.0229&0.0083&\multicolumn{1}{c|}{0.0045}&0.0887&0.0097&0.0022&\multicolumn{1}{c|}{0.0009}&0.0721&0.0045&$6.7\!\!\times\!\!10^{-\!4}$&$1.9\!\!\times\!\!10^{-\!4}$\\ 
			
			\multicolumn{1}{c|}{}&\multicolumn{1}{c|}{BER (HD)} &0.0931&0.0086&0.0022&\multicolumn{1}{c|}{0.0010}&0.0660&0.0017&0.0002&\multicolumn{1}{c|}{$5\!\!\times\!\!10^{-\!5}$}&0.0492&$3.1\!\!\times\!\!10^{-\!4}$&$1.5\!\!\times\!\!10^{-\!5}$&$1.5\!\!\times\!\!10^{-\!6}$ \\
			
			\multicolumn{1}{c|}{}&\multicolumn{1}{c|}{THP Ratio} &1.9506&1.9711&1.9876&\multicolumn{1}{c|}{1.9929}&1.9515&1.9840&1.9959&\multicolumn{1}{c|}{1.9983}&1.9517&1.9916&1.9987&1.9996 \\
			
			\hline
			
			\multicolumn{1}{c|}{$$}&\multicolumn{1}{c|}{$t^{\text{FD}}_s$=$t^{\text{HD}}_s$} &$\SI{0.100}{\second}$&$\SI{0.200}{\second}$&$\SI{0.300}{\second}$&\multicolumn{1}{c|}{$\SI{0.400}{\second}$}&$\SI{0.100}{\second}$&$\SI{0.200}{\second}$&$\SI{0.300}{\second}$&\multicolumn{1}{c|}{$\SI{0.400}{\second}$}&$\SI{0.100}{\second}$&$\SI{0.200}{\second}$&$\SI{0.300}{\second}$&\multicolumn{1}{c}{$\SI{0.400}{\second}$}\\
			
			\multicolumn{1}{c|}{$\text{Case 2}$}&\multicolumn{1}{c|}{BER (FD)} &0.0229&0.0045&0.0023&\multicolumn{1}{c|}{0.0014}&0.0097&0.0009&0.0003&\multicolumn{1}{c|}{0.0001}&0.0045&$1.9\!\!\times\!\!10^{-\!4}$&$4.7\!\!\times\!\!10^{-\!5}$&$1.1\!\!\times\!\!10^{-\!5}$\\ 
			
			\multicolumn{1}{c|}{}&\multicolumn{1}{c|}{BER (HD)} &0.0931&0.0086&0.0022&\multicolumn{1}{c|}{0.0010}&0.0660&0.0017&0.0002&\multicolumn{1}{c|}{$5\!\!\times\!\!10^{-\!5}$}&0.0492&$3.1\!\!\times\!\!10^{-\!4}$&$1.5\!\!\times\!\!10^{-\!5}$&$1.5\!\!\times\!\!10^{-\!6}$ \\
			
			\multicolumn{1}{c|}{}&\multicolumn{1}{c|}{THP Ratio} &1.0774&1.0041&0.9999&\multicolumn{1}{c|}{0.9997}&1.0603&1.0008&0.9999&\multicolumn{1}{c|}{0.9999}&1.0469&1.0001&1.0000&1.0000 \\
			
			\hline

			\multicolumn{1}{c|}{}&\multicolumn{1}{c|}{$t^{\text{HD}}_s$} &$\SI{0.100}{\second}$&$\SI{0.200}{\second}$&$\SI{0.300}{\second}$&\multicolumn{1}{c|}{$\SI{0.400}{\second}$}&$\SI{0.100}{\second}$&$\SI{0.200}{\second}$&$\SI{0.300}{\second}$&\multicolumn{1}{c|}{$\SI{0.400}{\second}$}&$\SI{0.100}{\second}$&$\SI{0.200}{\second}$&$\SI{0.300}{\second}$&\multicolumn{1}{c}{$\SI{0.400}{\second}$}\\
			
			\multicolumn{1}{c|}{$\text{Case 3}$}&\multicolumn{1}{c|}{$t^{\text{FD}}_s$} &$\SI{0.056}{\second}$&$\SI{0.147}{\second}$&$\SI{0.320}{\second}$&\multicolumn{1}{c|}{$\SI{0.630}{\second}$}&$\SI{0.056}{\second}$&$\SI{0.165}{\second}$&$\SI{0.390}{\second}$&\multicolumn{1}{c|}{$\SI{2.785}{\second}$}&$\SI{0.056}{\second}$&$\SI{0.178}{\second}$&$\SI{0.398}{\second}$&N/A\\ 
			
			\multicolumn{1}{c|}{$$}&\multicolumn{1}{c|}{THP Ratio} &1.7747&1.3603&0.9375&\multicolumn{1}{c|}{0.6349}&1.7853&1.2122&0.7692&\multicolumn{1}{c|}{0.1436}&1.7820&1.1236&0.7538&N/A \\
			\hline\hline
		\end{tabular}
	\end{center}
\end{table*}

In Fig.~\ref{fig_heatmap}, we depict a heatmap of the theoretical BER with respect to $T_c$ and $\tau_m$ for the full-duplex system with D-SIC and A-SIC. We observe that $\tau_m$ minimizing the BER with given $T_c$ decreases as $T_c$ increases. It is because that the molecules from the paired transmitter (i.e., desired signal) are also discarded by A-SIC.
For each $t^{\text{FD}}_s$, we find global optimal normalized detection threshold and the discarding time in order to minimize the BER and denote them as red mark on the heatmap. We utilized these optimal values in the SIC algorithm to compare the throughput of the half-duplex system and the full-duplex system with SIC.

As was mentioned before, for comparison, we consider the following three cases: i) set $t^{\text{FD}}_s\!=\!t^{\text{HD}}_s/2$ (to observe a trade off between the throughput and BER); ii) set the same $t_s$ for both systems; iii) fix $t^{\text{HD}}_s$ and adjust $t^{\text{FD}}_s$ empirically to make the BER of both systems close. For the case iii), adjusted $t^{\text{FD}}_s$ values are in Table~\ref{table_bcsk_bcsk}.

%Figs.~\ref{fig_bcsk_bcsk_ts_half},~\ref{fig_bcsk_bcsk_ts}and~\ref{fig_bcsk_bcsk_ber} correspond to cases i), ii) and iii), respectively.   
%%%%%%%%%%%%%%

%%%%%%%%%%%%
\begin{figure}[!t]
	\begin{center}
		\includegraphics[width=0.82\columnwidth,keepaspectratio]%
		{}
		\caption{Theoretical BER of the half-duplex system (BCSK) and the full-duplex system with optimized D-SIC and A-SIC (BCSK), where $t^{\text{FD}}_s\!=\!t^{\text{HD}}_s/2$.}
		\label{fig_bcsk_bcsk_ts_half}
	\end{center}
\end{figure}
\begin{figure}[!t]
	\begin{center}
		\includegraphics[width=0.81\columnwidth,keepaspectratio]%
		{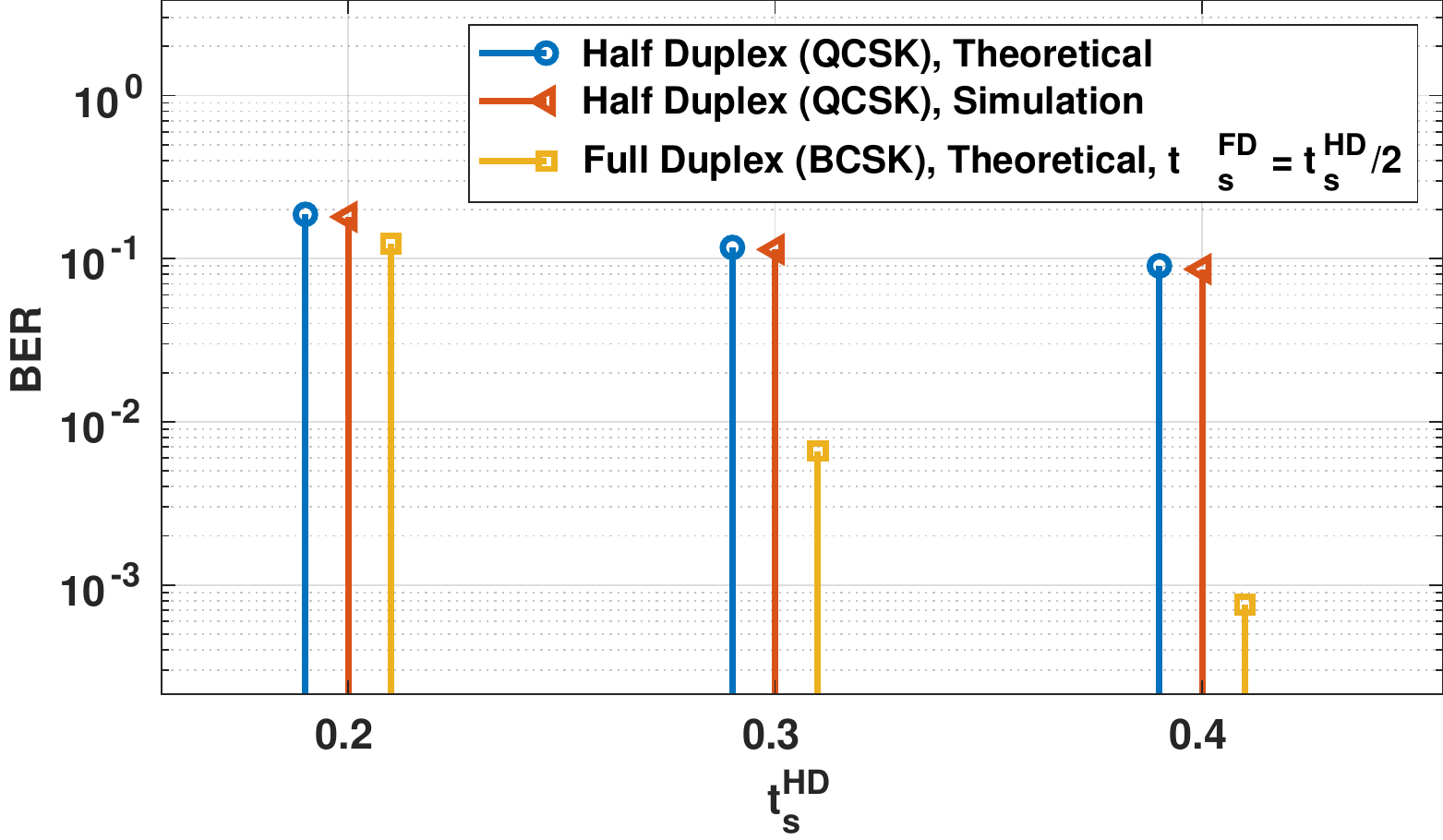}
		\caption{Theoretical and simulation BER of the half-duplex system (QCSK) and the theoretical BER of the full-duplex system with optimized D-SIC and A-SIC (BCSK), $\Ntx$ is 500.}
		\label{fig_qcsk_bcsk}
	\end{center}
\end{figure}

For the case i), Fig.~\ref{fig_bcsk_bcsk_ts_half} shows that the throughputs of the full-duplex system with optimized SIC are almost double the throughput of the half-duplex system. Thus, we can achieve nearly double the transmission rate using the proposed SIC techniques without degrading the BER significantly. When $t^{\text{HD}}_{s}\!=\!t^{\text{FD}}_{s}$ is $\SI{0.100}{\second}$ or $\SI{0.200}{\second}$, the BER of the full-duplex system with optimized SIC is less than the BER of the half-duplex system. The converse is true when $t^{\text{HD}}_{s}\!=\!t^{\text{FD}}_{s}$ is $\SI{0.300}{\second}$ or $\SI{0.400}{\second}$. Adjusted $t^{\text{FD}}_{s}$ throughput values in Table~\ref{table_bcsk_bcsk} show the same tendency. 
The overall results imply that the full-duplex system with optimized SIC becomes better than the half-duplex system even in terms of BER when $t_s$ get decreased.
%%%%%%%%%%%%%%
%%%%%%%%%%%%

\begin{table}
	\caption{Throughput ratio (FD/HD) of the half-duplex system using QCSK and the full-duplex system using BCSK with optimized D-SIC and A-SIC }
	\label{table_qcsk_bcsk}
	\begin{center}
		\begin{tabular}{c|ccc}
			\hline\hline 
			\hphantom{123} $\Ntx$ \phantom{$1^{2^{3}}$}  			& $t^{\text{HD}}_s \,(\si{\second})$ & $t^{\text{FD}}_s \,(\si{\second})$ & Throughput Ratio (FD/HD)\\
			\hline
			\multirow{ 3}{*}{300} & 0.200 			& 0.100 			& 1.222 \\
			& 0.300 			& 0.150 			& 1.295 \\
			& 0.400 			& 0.200 			& 1.274 \\
			\hline
			\multirow{ 3}{*}{400} & 0.200 			& 0.100 			& 1.134 \\
			& 0.300 			& 0.150 			& 1.188 \\
			& 0.400 			& 0.200 			& 1.162	 \\
			\hline
			\multirow{ 3}{*}{500} & 0.200 			& 0.100 			& 1.078 \\
			& 0.300 			& 0.150 			& 1.125 \\
			& 0.400 			& 0.200 			& 1.098 \\
			\hline\hline
		\end{tabular}
	\end{center}
\end{table}
In Fig.~\ref{fig_qcsk_bcsk} we depict the simulation and theoretical BER of the half-duplex system using QCSK and compare them to the full-duplex system with optimized SIC using BCSK where $t^{\text{FD}}_s\!=\!t^{\text{HD}}_s/2$. In this case, $M/t_s$ is the same for both systems. Hence, the BER determines the difference between the throughputs. For QCSK, we used an equally spaced number of molecules for encoding different bits (i.e., bit-0, 1, 2, 3) and three thresholds (i.e., $\tau_{d1}$, $\tau_{d2}$, $\tau_{d3}$) to detect the molecular received signal. Fig.~\ref{fig_qcsk_bcsk} shows that the BER of the half-duplex system using QCSK is much higher than that of the full-duplex system using BCSK with optimized SIC. The theoretical BERs of the half-duplex system using QCSK are calculated by using a straightforward extension of \eqref{eq_ber}. The throughput difference between the two systems can be seen in Table~\ref{table_qcsk_bcsk}.
%%%%%%%%%%%%%%
%%%%%%%%%%%%

%%%%%%%%%%%%%%%%%%%%%%%%%%%%%%%%%%%%%%%%%%

\section{Conclusion}
\label{sec_conclusion}
In this paper, we have investigated two different communication models of two-way MCvD---a half-duplex system and a full-duplex system. We derived the multi-receiver channel model. We also derived the BER formula and verified the formula by simulation.
Theoretical analysis and simulations showed that severe SI occurs in the full-duplex system.  Therefore, we proposed two SIC techniques to mitigate this interference: A-SIC and D-SIC. 
We analyzed the BER improvements in the full-duplex system with the proposed SIC techniques and numerically found the optimal values for the normalized detection threshold and the discarding time in order to minimize the system BER.
To compare the half-duplex system with the full-duplex system, we evaluated the throughput in three different cases. The throughput of the full-duplex system with optimized SIC increased to more than that of the half-duplex system as $t_s$ decreased. With the proposed SIC techniques, we showed the possibility of full-duplex molecular communication using a single type of molecule.
On the other hand, the BER analysis and simulation results revealed that using a concentration-based modulation technique of higher order significantly degrades the BER. Investigating a more effective modulation technique for the two-way MCvD will be a topic for the future work. We will also study the channel model for more general system configurations.

\section{Appendix}
\subsection{Derivation of $d_{\fixpnew{1}{\ell}}^{\rxJ{2}}$ and $d_{\fixpnew{2}{\ell}}^{\rxJ{1}}$}
Using~\eqref{eqn_3d_frac_received_point_src}, we rewrite~\eqref{cdf_limit} and combine with~\eqref{eq_capture_prob} as
\begin{align}
\begin{split}
k_1&=\lim_{t\to\infty}\cdft{1}{\txJ{1}}{t} =\frac{\frac{r_{r1}}{r_{r1}\!+\!d_{\txJ{1}}^{\rxJ{1}}}-\frac{r_{r1}}{r_{r1}\!+\!d_{\fixpnew{2}{\ell}}^{\rxJ{1}}}\frac{r_{r2}}{r_{r2}\!+\!d_{\txJ{1}}^{\rxJ{2}}}}{1-\frac{r_{r2}}{r_{r2}\!+\!d_{\fixpnew{1}{\ell}}^{\rxJ{2}}}\frac{r_{r1}}{r_{r1}\!+\!d_{\fixpnew{2}{\ell}}^{\rxJ{1}}}}\\
k_2&=\lim_{t\to\infty}\cdft{2}{\txJ{1}}{t} =\frac{\frac{r_{r2}}{r_{r2}\!+\!d_{\txJ{1}}^{\rxJ{2}}}-\frac{r_{r2}}{r_{r2}\!+\!d_{\fixpnew{1}{\ell}}^{\rxJ{2}}}\frac{r_{r1}}{r_{r1}\!+\!d_{\txJ{1}}^{\rxJ{1}}}}{1-\frac{r_{r2}}{r_{r2}\!+\!d_{\fixpnew{1}{\ell}}^{\rxJ{2}}}\frac{r_{r1}}{r_{r1}\!+\!d_{\fixpnew{2}{\ell}}^{\rxJ{1}}}},
\end{split} 
\label{eq_appen_a1}
\end{align}
where $k_1$ and $k_2$ indicate $k_1(\mu,\eta,\phi)$ and $k_1(\mu,\eta,\phi)$, respectively.
Note that~\eqref{eq_appen_a1} can be rewritten as quadratic equations of $d_{\fixpnew{1}{\ell}}^{\rxJ{2}}$ and $d_{\fixpnew{2}{\ell}}^{\rxJ{1}}$, and the solutions of the equations are derived as the following:
\begin{align}
\begin{split}
d_{\fixpnew{1}{\ell}}^{\rxJ{2}}&=\frac{r_{r2}^2(k_1+k_2-1)+r_{r2}d_{\txJ{1}}^{\rxJ{2}}(k_1+k_2)}{r_{r2}(1-k_2)-d_{\txJ{1}}^{\rxJ{2}}k_2}\\
d_{\fixpnew{2}{\ell}}^{\rxJ{1}}&=\frac{r_{r1}^2(k_1+k_2-1)+r_{r1}d_{\txJ{1}}^{\rxJ{1}}(k_1+k_2)}{r_{r1}(1-k_1)-d_{\txJ{1}}^{\rxJ{1}}k_1}.
\end{split} 
\label{eq_appen_a2}
\end{align}
\label{appena}
\subsection{Derivation of \eqref{app_final}}
\label{appenb}
Through performing Laplace transform in \eqref{cdf_1_app} and solving simultaneous equations, we get
\begin{align}
\begin{split}
\cdft{2}{\txJ{1}}{s}=\frac{-(A-1)c_1 e^{-b_1\sqrt{\frac{s}{D}}}}{s(e^{B\sqrt{\frac{s}{D}}}-A)}+ \frac{(A-1)c_2 e^{-b_2\sqrt{\frac{s}{D}}}}{s(e^{B\sqrt{\frac{s}{D}}}-A)} \\
\cdft{1}{\txJ{1}}{s}=\frac{-(A-1)c_3 e^{-b_3\sqrt{\frac{s}{D}}}}{s(e^{B\sqrt{\frac{s}{D}}}-A)}+ \frac{(A-1)c_4 e^{-b_4\sqrt{\frac{s}{D}}}}{s(e^{B\sqrt{\frac{s}{D}}}-A)},
\end{split}
\label{cdf_laplace}
\end{align}
where $\cdft{2}{\txJ{1}}{s}$ and $\cdft{1}{\txJ{1}}{s}$ are the Laplace transform of $\cdft{2}{\txJ{1}}{t}$ and $\cdft{1}{\txJ{1}}{t}$, respectively. To derive the inverse Laplace transforms of \eqref{cdf_laplace}, we first derive the inverse Laplace transforms of $\cdft{2}{\txJ{1}}{s^2}$ and $\cdft{1}{\txJ{1}}{s^2}$, where 
\begin{align}
\begin{split}
\cdft{2}{\txJ{1}}{s^2}=\frac{-(A-1)c_1 e^{\frac{-b_1s}{\sqrt{D}}}}{s^2(e^{\frac{Bs}{\sqrt{D}}}-A)}+ \frac{(A-1)c_2 e^{\frac{-b_2s}{\sqrt{D}}}}{s^2(e^{\frac{Bs}{\sqrt{D}}}-A)}\\
\cdft{1}{\txJ{1}}{s^2}=\frac{-(A-1)c_3 e^{\frac{-b_3s}{\sqrt{D}}}}{s^2(e^{\frac{Bs}{\sqrt{D}}}-A)}+ \frac{(A-1)c_4 e^{\frac{-b_4s}{\sqrt{D}}}}{s^2(e^{\frac{Bs}{\sqrt{D}}}-A)}.
\end{split}
\label{cdf_laplace_sq}
\end{align}
Using~\cite[eqs. (5.1), (5.57), (1.18)]{laplace}, the inverse laplace transforms of \eqref{cdf_laplace_sq} can be expressed as the following:
\begin{align}
\begin{split}
L^{-1}\{\cdft{2}{\txJ{1}}{s^2}\}=k_3(t)-k_4(t) \\
L^{-1}\{\cdft{1}{\txJ{1}}{s^2}\}=k_1(t)-k_2(t),
\end{split}
\end{align}
where
\begin{align}
\begin{split}
k_i(t)&\!\!=\!u(t\!\!-\!\!b_i)\left\{\frac{A^n\!\!-\!\!1}{A\!\!-\!\!1}(t\!\!-\!\!b_i)\!+\!\frac{B(-nA^{n+1}\!\!+\!(n\!\!+\!\!1)A^n\!\!-\!\!1)}{(A\!\!-\!\!1)^2}\!\!\right\}, \\&b_i+(n+1)B<t<b_i+nB, n=-1,-2...-\infty.
%k_2(t)&\!\!=\!u(t\!\!-\!\!b_2)\left\{\frac{A^n\!\!-\!\!1}{A\!\!-\!\!1}(t\!\!-\!\!b_2)\!+\!\frac{B(-nA^{n+1}\!\!+\!(n\!\!+\!\!1)A^n\!\!-\!\!1)}{(A\!\!-\!\!1)^2}\!\!\right\}, \\&b_2+(n+1)B<t<b_2+nB, n=-1,-2...-\infty\\
%k_3(t)&\!\!=\!u(t\!\!-\!\!b_3)\left\{\frac{A^n\!\!-\!\!1}{A\!\!-\!\!1}(t\!\!-\!\!b_3)\!+\!\frac{B(-nA^{n+1}\!\!+\!(n\!\!+\!\!1)A^n\!\!-\!\!1)}{(A\!\!-\!\!1)^2}\!\!\right\}, \\&b_3+(n+1)B<t<b_3+nB, n=-1,-2...-\infty\\
%k_4(t)&\!\!=\!u(t\!\!-\!\!b_4)\left\{\frac{A^n\!\!-\!\!1}{A\!\!-\!\!1}(t\!\!-\!\!b_4)\!+\!\frac{B(-nA^{n+1}\!\!+\!(n\!\!+\!\!1)A^n\!\!-\!\!1)}{(A\!\!-\!\!1)^2}\!\!\right\}, \\&b_4+(n+1)B<t<b_4+nB, n=-1,-2...-\infty
\end{split}
\label{cdf_ilaplace_sq}
\end{align}
In \eqref{cdf_ilaplace_sq}, $u(t)$ is the unit step function. From \eqref{cdf_ilaplace_sq}, we derive the inverse Laplace transform of \eqref{cdf_laplace} as the following~\cite[eqs. (1.27)]{laplace}:
\begin{align}
\begin{split}
\cdft{2}{\txJ{1}}{t}=\frac{1}{2\sqrt{\pi t^3}} \int_{0}^{\infty}ue^{\frac{-u^2}{4t}}\left(L^{-1}\{\cdft{2}{\txJ{1}}{s^2}\}\right)du\\
\cdft{1}{\txJ{1}}{t}=\frac{1}{2\sqrt{\pi t^3}} \int_{0}^{\infty}ue^{\frac{-u^2}{4t}}\left(L^{-1}\{\cdft{1}{\txJ{1}}{s^2}\}\right)du,
\end{split}
\label{cdf_ilaplace}
\end{align}
which give us \eqref{app_final}.

\bibliographystyle{IEEEtran}
%\balance % BALANCING THE LAST PAGE
\bibliography{IEEEabrv,two_wayb}

% Generated by IEEEtran.bst, version: 1.12 (2007/01/11)
\begin{thebibliography}{10}
\providecommand{\url}[1]{#1}
\csname url@samestyle\endcsname
\providecommand{\newblock}{\relax}
\providecommand{\bibinfo}[2]{#2}
\providecommand{\BIBentrySTDinterwordspacing}{\spaceskip=0pt\relax}
\providecommand{\BIBentryALTinterwordstretchfactor}{4}
\providecommand{\BIBentryALTinterwordspacing}{\spaceskip=\fontdimen2\font plus
\BIBentryALTinterwordstretchfactor\fontdimen3\font minus
  \fontdimen4\font\relax}
\providecommand{\BIBforeignlanguage}[2]{{%
\expandafter\ifx\csname l@#1\endcsname\relax
\typeout{** WARNING: IEEEtran.bst: No hyphenation pattern has been}%
\typeout{** loaded for the language `#1'. Using the pattern for}%
\typeout{** the default language instead.}%
\else
\language=\csname l@#1\endcsname
\fi
#2}}
\providecommand{\BIBdecl}{\relax}
\BIBdecl

\bibitem{kwack_conference}
J.~W. Kwack, H.~B. Yilmaz, N.~Farsad, C.-B. Chae, and A.~Goldsmith, ``Two way
  molecular communications,'' in \emph{Proc. ACM Int. Conf. Nanoscale Comput.
  Commun. (NANOCOM)}, Sep. 2018.

\bibitem{lavan2003smallSS}
D.~A. {LaVan}, T.~{McGuire}, and R.~Langer, ``Small-scale systems for in vivo
  drug delivery,'' \emph{Nat. Biotechnol.}, vol.~21, no.~10, pp. 1184--1191,
  Oct. 2003.

\bibitem{requicha2003nanorobotsNEMS}
A.~A.~G. Requicha, ``Nanorobots, {NEMS}, and nanoassembly,'' \emph{Proc.
  {IEEE}}, vol.~91, no.~11, pp. 1922--1933, Nov. 2003.

\bibitem{basu2005syntheticMS}
S.~Basu, Y.~Gerchman, C.~H. Collins, F.~H. Arnold, and R.~Weiss, ``A synthetic
  multicellular system for programmed pattern formation,'' \emph{Nature}, vol.
  434, no. 7037, pp. 1130--1134, Apr. 2005.

\bibitem{cavalcanti2005nanoroboticsCD}
A.~Cavalcanti and R.~A. Freitas-Jr., ``Nanorobotics control design: a
  collective behavior approach for medicine,'' \emph{{IEEE} Trans.
  Nanobiosci.}, vol.~4, no.~2, pp. 133--140, Jun. 2005.

\bibitem{couvreur2006nanotechnologyID}
P.~Couvreur and C.~Vauthier, ``Nanotechnology: Intelligent design to treat
  complex disease,'' \emph{Springer Pharmaceut. Res.}, vol.~23, no.~7, pp.
  1417--1450, Jun. 2006.

\bibitem{farsad2016comprehensiveSO}
N.~Farsad, H.~B. Yilmaz, A.~Eckford, C.-B. Chae, and W.~Guo, ``A comprehensive
  survey of recent advancements in molecular communication,'' \emph{{IEEE}
  Commun. Surveys Tuts.}, vol.~18, no.~3, pp. 1887--1919, Oct. 2016.

\bibitem{guo2015molecularVE}
W.~Guo, C.~Mias, N.~Farsad, and J.-L. Wu, ``Molecular versus electromagnetic
  wave propagation loss in macro-scale environments,'' \emph{{IEEE} Trans. Mol.
  Biol. Multi-Scale Commun.}, vol.~1, no.~1, pp. 18--25, Mar. 2015.

\bibitem{malak2012molecularCN}
D.~Malak and O.~B. Akan, ``Molecular communication nanonetworks inside human
  body,'' \emph{Elsevier Nano Commun. Netw.}, vol.~3, no.~1, pp. 19--35, Mar.
  2012.

\bibitem{birkan2010energy}
M.~S. Kuran, H.~B. Yilmaz, T.~Tugcu, and B.~Özerman, ``Energy model for
  communication via diffusion in nanonetworks,'' \emph{Nano Communication
  Networks}, vol.~1, no.~2, Jul. 2010.

\bibitem{nature1}
T.~Danino, O.~Mondragón-Palomino, L.~Tsimring, and J.~Hasty, ``A synchronized
  quorum of genetic clocks,'' \emph{Nature}, vol. 463, no. 7279, pp. 326--330,
  Jan. 2010.

\bibitem{nature2}
A.~Prindle, P.~Samayoa, I.~Razinkov, T.~Danino, L.~Tsimring, and J.~Hasty, ``A
  sensing array of radically coupled genetic biopixels,'' \emph{Nature}, vol.
  481, pp. 39--44, Jan. 2012.

\bibitem{QS}
H.~B. Kaplan and E.~P. Greenberg, ``Diffusion of autoinducer is involved in
  regulation of the vibrio fischeri luminescence system,'' \emph{J.
  Bacteriol.}, vol. 163, no.~3, pp. 1210--1214, Sep. 1985.

\bibitem{MCN}
A.~{Einolghozati}, M.~{Sardari}, A.~{Beirami}, and F.~{Fekri}, ``Data gathering
  in networks of bacteria colonies: Collective sensing and relaying using
  molecular communication,'' in \emph{Proc. IEEE Int. Conf. on Comput. Commun.
  Workshops (INFOCOM WKSHPS)}, Mar. 2012.

\bibitem{redner2001guideTF}
S.~Redner, \emph{A guide to first-passage processes}.\hskip 1em plus 0.5em
  minus 0.4em\relax Cambridge University Press, Aug. 2001.

\bibitem{srinivas2012molecularCI}
K.~V. Srinivas, A.~W. Eckford, and R.~S. Adve, ``Molecular communication in
  fluid media: The additive inverse gaussian noise channel,'' \emph{{IEEE}
  Trans. Inf. Theory}, vol.~58, no.~7, pp. 4678--4692, Dec. 2012.

\bibitem{nakano2012channelMA}
T.~Nakano, Y.~Okaie, and J.-Q. Liu, ``Channel model and capacity analysis of
  molecular communication with brownian motion,'' \emph{{IEEE} Commun. Lett.},
  vol.~16, no.~6, pp. 797--800, Jun. 2012.

\bibitem{yilmaz2014threeDC}
H.~B. Yilmaz, A.~C. Heren, T.~Tugcu, and C.-B. Chae, ``Three-dimensional
  channel characteristics for molecular communications with an absorbing
  receiver,'' \emph{{IEEE} Commun. Lett.}, vol.~18, no.~6, pp. 929--932, Apr.
  2014.

\bibitem{genc2016isiAM}
G.~Genc, Y.~E. Kara, H.~B. Yilmaz, and T.~Tugcu, ``{ISI}-aware modeling and
  achievable rate analysis of the diffusion channel,'' \emph{{IEEE} Commun.
  Lett.}, vol.~20, no.~9, pp. 1729--1732, Jun. 2016.

\bibitem{adam2014enzyme}
A.~Noel, K.~C. Cheung, and R.~Schober, ``Improving receiver performance of
  diffusive molecular communication with enzymes,'' \emph{{IEEE} Trans.
  Nanobiosci.}, vol.~13, no.~1, pp. 31--43, Mar. 2014.

\bibitem{tepe2015ISI}
B.~Tepekule, A.~E. Pusane, H.~B. Yilmaz, C.-B. Chae, and T.~Tugcu, ``{ISI}
  mitigation techniques in molecular communication,'' \emph{{IEEE} Trans. Mol.
  Biol. Multi-Scale Commun.}, vol.~1, no.~2, pp. 202--216, Jun. 2015.

\bibitem{stanford2014}
D.~Bharadia, E.~McMilin, and S.~Katti, ``Full duplex radios,'' \emph{SIGCOMM
  Comput. Commun. Rev.}, vol.~43, no.~4, pp. 375--386, Aug. 2013.

\bibitem{passivecancellation}
E.~{Everett}, A.~{Sahai}, and A.~{Sabharwal}, ``Passive self-interference
  suppression for full-duplex infrastructure nodes,'' \emph{{IEEE} Trans.
  Wireless Commun.}, vol.~13, no.~2, pp. 680--694, Feb 2014.

\bibitem{protominkeun}
M.~Chung, M.~S. Sim, J.~Kim, K.~Kim, Dong, and C.-B. Chae, ``Prototyping
  real-time full duplex radios,'' \emph{{IEEE} Commun. Mag.}, vol.~53, no.~9,
  pp. 56--63, Sep. 2015.

\bibitem{twoway2017blackseacom}
A.~E.~P. Bayram Cevdet~Akdeniz and T.~Tugcu, ``Two-way communication systems in
  molecular communication,'' in \emph{Proc. IEEE Int. Black Sea Conf. on
  Commun. and Netw. (BlackSeaCom)}, Jun. 2017.

\bibitem{yu2019twoway}
Y.~Huang, M.~Wen, C.~Lee, C.-B. Chae, and F.~Ji, ``A two-way molecular
  communication assisted by an impulsive force,'' \emph{{IEEE} Trans. Ind.
  Informat.}, vol.~15, no.~5, pp. 3048--3057, May 2019.

\bibitem{twoway_transparent}
M.~{Farahnak-Ghazani}, G.~{Aminian}, M.~{Mirmohseni}, A.~{Gohari}, and
  M.~{Nasiri-Kenari}, ``On medium chemical reaction in diffusion-based
  molecular communication: A two-way relaying example,'' \emph{{IEEE} Trans.
  Commun.}, vol.~67, no.~2, pp. 1117--1132, Feb 2019.

\bibitem{modul2013isomer}
N.-R. Kim and C.-B. Chae, ``Novel modulation techniques using isomers as
  messenger molecules for nano communication networks via diffusion,''
  \emph{{IEEE} J. Sel. Areas Commun.}, vol.~31, no.~12, pp. 847--856, Dec.
  2013.

\bibitem{narae2014ISIopt}
N.-R. Kim, A.~Eckford, and C.-B. Chae, ``Symbol interval optimization for
  molecular communication with drift,'' \emph{{IEEE} Trans. Nanotechnol.},
  vol.~13, no.~3, pp. 223--229, Jul. 2014.

\bibitem{narae2014nonlinear}
N.-R. Kim, N.~Farsad, A.~Eckford, and C.-B. Chae, ``Channel and noise models
  for nonlinear molecular communication systems,'' \emph{{IEEE} J. Sel. Areas
  Commun.}, vol.~32, no.~12, pp. 2392--2401, Dec. 2014.

\bibitem{nariman2013tabletop}
N.~Farsad, W.~Guo, and A.~Eckford, ``Tabletop molecular communication: Text
  messages through chemical signals,'' \emph{PloS one}, vol.~8, no.~12, p.
  e82935, Dec. 2013.

\bibitem{shahmohammadian2013blindSI}
H.~ShahMohammadian, G.~G. Messier, and S.~Magierowski, ``Blind synchronization
  in diffusion-based molecular communication channels,'' \emph{{IEEE} Commun.
  Lett.}, vol.~17, no.~11, pp. 2156--2159, Nov. 2013.

\bibitem{diffusion_coeffi}
J.~V. Tyrrell and K.~Harris, \emph{Diffusion in Liquids: A Theoretical and
  Experimental Study}, jan 1984, vol.~5.

\bibitem{bonhong2016MIMO}
B.~H. Koo, C.~Lee, H.~B. Yilmaz, N.~Farsad, A.~Eckford, and C.-B. Chae,
  ``Molecular {MIMO}: From theory to prototype,'' \emph{{IEEE} J. Sel. Areas
  Commun.}, vol.~34, no.~3, pp. 600--614, Mar. 2016.

\bibitem{ml2017SPAWC}
C.~Lee, H.~B. Yilmaz, C.-B. Chae, N.~Farsad, and A.~Goldsmith, ``Machine
  learning based channel modeling for molecular {MIMO} communications,'' in
  \emph{Proc. IEEE Workshop on Signal Process. Adv. in Wireless Commun.
  (SPAWC)}, Apr. 2017.

\bibitem{smolu_solution}
H.~Sano, ``Solutions to the {Smoluchowski} equation for problems involving the
  anisotropic diffusion or absorption of a particle,'' \emph{Chemical Physics -
  CHEM PHYS}, vol.~74, pp. 1394--1400, 01 1981.

\bibitem{smolu_laplace_solution}
P.~M. {Morse} and H.~{Feshbach}, \emph{{Methods of theoretical physics}}, 1953.

\bibitem{smolu_laplace}
M.~Tachiya, ``General method for calculating the escape probability in
  diffusion‐controlled reactions,'' \emph{The Journal of Chemical Physics},
  vol.~69, no.~6, pp. 2375--2376, 1978.

\bibitem{birkan2014simulation}
H.~B. Yilmaz and C.-B. Chae, ``Simulation study of molecular communication
  systems with an absorbing receiver: Modulation and {ISI} mitigation
  techniques,'' \emph{Simulation Modelling Practice and Theory}, vol.~49, pp.
  136--150, Dec. 2014.

\bibitem{laplace}
F.~Oberhettinger and L.~Harris, Badii, \emph{Tables of Laplace Transforms},
  1973.

\end{thebibliography}
%%%%%%%%%%%%%%%%%%%%%%%%%%%%%%%%%%%%%%%%%%%%%%%%%%%%%%%%

% biography section
% 
% If you have an EPS/PDF photo (graphicx package needed) extra braces are
% needed around the contents of the optional argument to biography to prevent
% the LaTeX parser from getting confused when it sees the complicated
% \includegraphics command within an optional argument. (You could create
% your own custom macro containing the \includegraphics command to make things
% simpler here.)
%\begin{biography}[{\includegraphics[width=1in,height=1.25in,clip,keepaspectratio]{mshell}}]{Michael Shell}
% or if you just want to reserve a space for a photo:

% You can push biographies down or up by placing
% a \vfill before or after them. The appropriate
% use of \vfill depends on what kind of text is
% on the last page and whether or not the columns
% are being equalized.

%\vfill

% Can be used to pull up biographies so that the bottom of the last one
% is flush with the other column.
%\enlargethispage{-5in}

% that's all folks
\end{document}